\documentclass[a4paper]{article}
\usepackage{a4wide}
\usepackage[utf8]{inputenc}

\usepackage{amsmath}
\usepackage{amsfonts}
 \usepackage{amssymb}
\usepackage{mathtools}   
\usepackage{empheq}
\usepackage{dsfont}

\usepackage{stmaryrd}

\usepackage{comment}
\usepackage{enumerate}

\usepackage{myprobdescr}
\usepackage[backgroundcolor=gray!10,linecolor=gray,bordercolor=black]{todonotes}

\usepackage{longtable}
\usepackage{lscape}

\usepackage[noend]{algpseudocode}

\usepackage{amsthm}
\newtheorem{theorem}{Theorem}{\bfseries}{\normalfont}
\newtheorem{lemma}{Lemma}{\bfseries}{\normalfont}
{\bfseries}{\normalfont}
{\bfseries}{\normalfont}
\newtheorem{rrule}{Rule}{\bfseries}{\normalfont}
{\bfseries}{\normalfont}
\newtheorem{obs}{Observation}{\bfseries}{\normalfont}
%



\usepackage{tikz}
\usetikzlibrary{positioning,backgrounds,patterns,calc}
\usepackage{pgfplots}
\usepackage{siunitx}
\usepackage{pgfplotstable}
\usepackage{caption,subcaption}

\pgfplotsset{
    discard if not/.style 2 args={
        x filter/.code={
            \edef\tempa{\thisrow{#1}}
            \edef\tempb{#2}
            \ifx\tempa\tempb
            \else
                
            \fi
        }
    }
}

\usepackage{bbding}
\usepackage{placeins}


\usepackage{microtype,ellipsis}

%
%
%
%
%
%

 \usepackage{phonetic}
 \newcommand{\dem}{\text{\hausad}\xspace}


\usepackage{tabularx}
\usepackage{array}
\usepackage{multirow}
\newcolumntype{L}[1]{>{\raggedright\let\newline\\\arraybackslash\hspace{0pt}}m{#1}}
\newcolumntype{C}[1]{>{\centering\let\newline\\\arraybackslash\hspace{0pt}}m{#1}}
\newcolumntype{R}[1]{>{\raggedleft\let\newline\\\arraybackslash\hspace{0pt}}m{#1}}
\makeatletter
\newcommand{\thickhline}{%
    \noalign {\ifnum 0=`}\fi \hrule height 1pt
    \futurelet \reserved@a \@xhline
}
\newcolumntype{"}{@{\hskip\tabcolsep\vrule width 1pt\hskip\tabcolsep}}
\makeatother


\usepackage[pagebackref,pdfdisplaydoctitle,pagecolor=orange!40!black,menucolor=orange!40!black,filecolor=magenta!40!black,urlcolor=blue!40!black,linkcolor=red!40!black,citecolor=green!40!black,colorlinks]{hyperref}

\def\papertitle{Improved Upper and Lower Bound Heuristics for Degree Anonymization in Social Networks} 
\hypersetup{pdftitle={\papertitle}, pdfauthor={authors}}

\usepackage{algorithm}


\pagestyle{plain}

\usepackage{units}

\usepackage[numbers,sort]{natbib}
\setlength{\bibsep}{0.0pt}

\makeatletter
\def\NAT@spacechar{~}
\makeatother

\usepackage[sort&compress]{cleveref} 


 \newcommand{\blockSeq}{block sequence\xspace}
 \newcommand{\bigO}{\mathcal{O}}

 \newcommand{\size}{\ell}
\newcommand{\IndSet}{\textsc{Independent Set}\xspace}

\crefname{rrule}{Rule}{Rules}
\crefname{line}{Line}{Lines}
\crefname{equation}{Inequality}{Inequalities}
\crefname{section}{Section}{Sections}
\crefname{subsection}{Subsection}{Subsections}
\crefname{enumi}{Property}{Properties}


\title{\papertitle}

\author{Sepp Hartung \and Clemens Hoffmann \and André Nichterlein \\
 \\
  \multicolumn{1}{p{.9\textwidth}}{\centering{Institut f\"ur Softwaretechnik und Theoretische Informatik, TU Berlin,  Germany}}\\
\multicolumn{1}{p{.9\textwidth}}{\centering\texttt{\{sepp.hartung, andre.nichterlein, clemens.hoffmann\}@tu-berlin.de}
}}
\date{}


\usepackage{etoolbox}

\newcommand{\dash}{\nobreakdash-\hspace{0pt}}

\newcommand{\kDSA}{$k$\dash{}DSA\xspace}
\newcommand{\kDSRA}{$k$-RDSA\xspace}
\newcommand{\kDegAnon}{\textsc{Degree Anonymization}\xspace}
\newcommand{\kDegSeqRealAnon}{\textsc{Realizable $k$\dash Degree Sequence Anonymity}\xspace}
\newcommand{\kDegSeqAnon}{\textsc{$k$\dash Degree Sequence Anonymity}\xspace}
\newcommand{\kAnonymized }[1][k]{\ensuremath{#1}\dash{}anonymized\xspace}
\newcommand{\kAnonymous}[1][k]{\ensuremath{#1}\dash{}anonymous\xspace}
\newcommand{\kAnonymization}{$k$\dash{}anonymization\xspace}
\newcommand{\RealProb}{\textsc{Degree Realization}\xspace}

\newcommand{\TS}{\tilde{S}}
\newcommand{\TG}{\tilde{G}}

\newcommand{\degreeBound}{2\Delta^2}

\newcommand{\kInsertSet}[1][k]{\ensuremath{#1}\dash insertion set\xspace}

\newcommand{\cost}{\operatorname{cost}}

\newcommand{\D}{\mathcal{D}}
\newcommand{\B}{\mathcal{B}}

\newcommand{\N}{\mathds{N}}

\newcommand{\decprob}[3]{%
	\begin{center}
		\begin{minipage}{\textwidth}
		\defDecprob{#1}{#2}{#3}
		\end{minipage}
	\end{center}
}
\newcommand{\optprob}[3]{%
	\begin{center}
		\begin{minipage}{\textwidth}
		\defOptprob{#1}{#2}{#3}
		\end{minipage}
	\end{center}
}

\newcommand{\emptyBox}{}

\usepackage{caption}

\setlength{\textfloatsep}{10pt}

\begin{document}

\maketitle

\setcounter{footnote}{0}

\begin{abstract}
	Motivated by a strongly growing interest in anonymizing social network data, we investigate the NP-hard \kDegAnon problem: given an undirected graph, the task is to add a minimum number of edges such that the graph becomes \emph{\kAnonymous.}
	That is, for each vertex there have to be at least~$k-1$ other vertices of exactly the same degree.
%
	The model of degree anonymization has been introduced by Liu and Terzi~[ACM SIGMOD'08], who also proposed and evaluated a two-phase heuristic.
	We present an enhancement of this heuristic, including new algorithms for each phase which significantly improve on the previously known theoretical and practical running times.
	Moreover, our algorithms are optimized for large-scale social networks and provide upper and lower bounds for the optimal solution.
	Notably, on about~26\,\%~of the real-world data we provide (provably) optimal solutions; whereas in the other cases our upper bounds significantly improve on known heuristic~solutions.
\end{abstract}

\section{Introduction}

In recent years, the analysis of (large-scale) social networks received a steadily growing attention and turned into a very active research field~\cite{EK10}.
Its importance is mainly due the
easy availability of social networks and due to the potential gains of an analysis revealing important subnetworks, statistical information, etc.
However, as the analysis of networks may reveal sensitive data about the involved users, before publishing the networks it is necessary to preprocess them in order to respect privacy issues~\cite{FWCY10}.
In a landmark paper~\cite{LT08} initiating a lot of follow-up work~\cite{CGSV12,LSB12,HNNS13},\footnote{According to Google Scholar (accessed Feb.~2014) it has been cited more than 300 times.} \citeauthor{LT08} transferred the so-called \emph{k-anonymity} concept known for tabular data in databases~\cite{SS98,Sam01,Swe02,FWCY10} to social networks modeled as undirected graphs.
A graph is called \emph{\kAnonymous} if for each vertex there are at least~$k-1$ other vertices of the same degree.
Therein, the larger~$k$ is, the better the expected level of anonymity is.

In this work we describe and evaluate a combination of heuristic algorithms which provide (for many tested instances matching) lower and upper bounds, for the following NP-hard graph anonymization problem:

\optprob{\kDegAnon \cite{LT08}}
	{An undirected graph~$G=(V,E)$ and an integer~$k\in \N$.}
	{Find a minimum-size edge set~$E'$ over~$V$ such that adding~$E'$ to~$G$ results in a \kAnonymous graph.}
As \kDegAnon is NP-hard even for constant~$k\ge 2$~\cite{HNNS13}, all known (experimentally evaluated) algorithms, are heuristics in nature~\cite{LT08,LSB12,RHT13,TY09}.
\citet{LT08} proposed a heuristic which, in a nutshell, consists of the following two phases: i)~Ignore the graph structure and solve a corresponding number problem and ii)~try to transfer the solution from the number problem back to the graph instance.
\begin{figure}[t]
	\begin{center}
		\begin{tabular}{ccccccc}
			\begin{tikzpicture}[scale=1]
				\tikzstyle{knoten}=[circle,draw,fill=black!20,minimum size=5pt,inner sep=2pt]

				\node[knoten] (K-1) at (0,0) {};
				\node[knoten] (K-2) at (0,1) {};
				\node[knoten] (K-3) at (1,0) {};
				\node[knoten] (K-4) at (1,1) {};
				\foreach \i / \j in {1/2, 2/3, 2/4, 3/4}{
					\path (K-\i) edge[-] (K-\j);
				}
			\end{tikzpicture}
			&
			\begin{tikzpicture}[scale=1]
				\tikzstyle{knoten}=[minimum size=5pt,inner sep=2pt]
				\node[knoten] (K-1) at (0,0) {};
				\node[knoten] (K-2) at (0,1) {};
				\node at (0,0.5) {$\overset{\phantom{k}}{\Rightarrow}$};
			\end{tikzpicture}
			&
			\begin{tikzpicture}[scale=1]
				\tikzstyle{knoten}=[minimum size=5pt,inner sep=2pt]
				\node[knoten] (K-1) at (0,0) {};
				\node[knoten] (K-2) at (0,1) {};
				\node at (0,0.35) {1,2,2,3};
			\end{tikzpicture}
			&
			\begin{tikzpicture}[scale=1]
				\tikzstyle{knoten}=[minimum size=5pt,inner sep=2pt]
				\node[knoten] (K-1) at (0,0) {};
				\node[knoten] (K-2) at (0,1) {};
				\node at (0,0.5) {$\overset{Phase~1.}{\Rightarrow}$};
			\end{tikzpicture}
			&
			\begin{tikzpicture}[scale=1]
				\tikzstyle{knoten}=[minimum size=5pt,inner sep=2pt]
				\node[knoten] (K-1) at (0,0) {};
				\node[knoten] (K-2) at (0,1) {};
				\node at (0,0.35) {3,3,3,3};
			\end{tikzpicture}
			&
			\begin{tikzpicture}[scale=1]
				\tikzstyle{knoten}=[minimum size=5pt,inner sep=2pt]
				\node[knoten] (K-1) at (0,0) {};
				\node[knoten] (K-2) at (0,1) {};
				\node at (0,0.5) {$\overset{Phase~2.}{\Rightarrow}$};
			\end{tikzpicture}
			&
			\begin{tikzpicture}[scale=1]
				\tikzstyle{knoten}=[circle,draw,fill=black!20,minimum size=5pt,inner sep=2pt]

				\node[knoten] (K-1) at (0,0) {};
				\node[knoten] (K-2) at (0,1) {};
				\node[knoten] (K-3) at (1,0) {};
				\node[knoten] (K-4) at (1,1) {};

				\foreach \i / \j in {1/2, 2/3, 2/4, 3/4, 1/3, 1/4}{
					\path (K-\i) edge[-] (K-\j);
				}
			\end{tikzpicture}
			\\
			input graph~$G$ & & degree   & & ``\kAnonymized''  & & realization \\
			with $k=4$  & & sequence~$\D$ & & degree  sequence~$\D'$      & & of~$\D'$ in~$G$\\
		\end{tabular}
	\end{center}\vspace{-10pt}
	\caption{A simple example for the two phases in the heuristic of \citet{LT08}. Phase~1:~Anonymize the degree sequence~$\D$ of the input graph~$G$ by increasing the numbers in it such that each resulting number occurs at least~$k$ times. Phase~2: Realize the \kAnonymized degree sequence~$\D'$ as a super-graph of~$G$.}
	\label{fig:heuristicExample}
\end{figure}
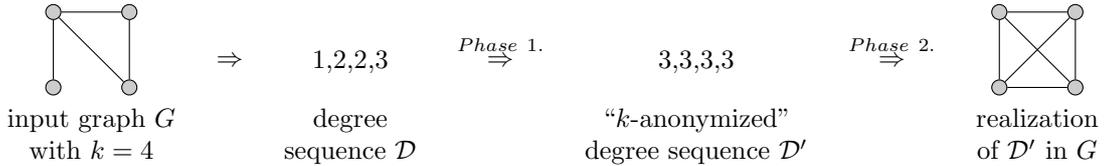
More formally (see \autoref{fig:heuristicExample} for an example), given an instance~$(G,k)$, first compute the \emph{degree sequence}~$\D$ of~$G$, that is, the multiset of positive integers corresponding to the vertex degrees in~$G$.
Then, Phase~1 consists of $k$\dash{}anonymizing the degree sequence~$\D$ (each number occurs at least~$k$ times) by a minimum amount of increments to the numbers in~$\D$ resulting in~$\D'$.
In Phase~2, try to realize the $k$\dash{}anonymous sequence~$\D'$ as a super-graph of~$G$, meaning that each vertex gets a \emph{demand}, which is the difference of its degree in~$\D'$ compared to~$\D$, and then a ``realization'' algorithm adds edges to~$G$ such that for each vertex the amount of incident new edges equals its demand.

Note that, since the minimum ``\kAnonymization cost'' of the degree sequence~$\D$ (sum over all demands) is always a lower bound on the \kAnonymization cost of~$G$, the above described algorithm, if successful when trying to realize~$\D'$ in~$G$, optimally solves the given \mbox{\kDegAnon~instance}.
%

\smallskip\noindent{\bf Related Work.}
We only discuss work on \kDegAnon directly related to what we present here.
Our algorithm framework is based on the two-phase algorithm due to~\citet{LT08} where also the model of graph \mbox{(degree-)anonymization} has been introduced.
Other models of graph anonymization have been studied as well, see \citet{ZP11} (studying the neighborhood of vertices) and \citet{CGSV12} (anonymizing vertex subsets).
We refer to \citet{ZPL08} for a survey on anonymization techniques for social networks.
\kDegAnon is NP-hard for constant~$k\ge 2$ and it is W[1]-hard (presumably not fixed-parameter tractable) with respect to the parameter size of a solution size~\cite{HNNS13}. 
On the positive side, there is polynomial-size kernel (efficient and effective preprocessing) with respect to the maximum degree of the input graph~\cite{HNNS13}. 
\citet{LSB12} and \citet{RHT13} designed and evaluated heuristic algorithms that are
our reference points for comparing our results.

\smallskip\noindent{\bf Our Contributions.}
Based on the two-phase approach of \citet{LT08} we significantly improve the lower bound provided in Phase~1 and provide a simple heuristic for new upper bounds in Phase~2.
Our algorithms are designed to deal with large-scale real world social networks (up to half a million vertices) and exploit some common features of social networks such as the power\dash{}law degree distribution~\cite{BA99}.
%
For Phase~1, we provide a new dynamic programming algorithm of $k$\dash{}anonymizing a degree sequence~$\D$ ``improving'' the previous running time~$\bigO(nk)$ to $\bigO(\Delta k^2 s)$.
Note that maximum degree~$\Delta$ is in our considered instances about 500 times smaller than the number of vertices~$n$.
We also implemented a data reduction rule which
leads to significant speedups of the dynamic program.
We study two different cases to obtain upper bounds.
If one of the degree sequences computed in Phase~1 is realizable, then this gives an optimal upper bound and otherwise
we heuristically look for ``near'' realizable degree sequences.
For Phase~2 we evaluate the already known ``local exchange'' heuristic~\cite{LT08} and provide some theoretical justification of its quality.

We implemented our algorithms and compare our upper bounds with a heuristic of \citet{LSB12}, called \emph{clustering-heuristic} in the following.
Our empirical evaluation demonstrates that in about 26\% of the real-world instances the lower bound matches the upper bound and in the remaining instances our heuristic upper bound is on average 40\% smaller than the one provided by the clustering-heuristic.
However, this comes at a cost of increased running time: the clustering-heuristic could solve all instances within~15~seconds whereas there are a few instances where our algorithms could not compute an upper~bound~within~one~hour.

Due to the space constraints, all proofs and some details are deferred to an appendix.

\section{Preliminaries}\label{sec:prelim}

We use standard graph-theoretic notation.
All graphs studied in this paper are undirected and simple without self-loops and multi-edges.
For a given graph~$G=(V,E)$ with vertex set~$V$ and edge set~$E$ we set~$n:=|V|$ and~$m:=|E|$.
Furthermore, by~$\deg_G(v)$ we denote the degree of a vertex~$v\in V$ in~$G$ and $\Delta_G$ denotes the maximum degree in~$G$.
For $0\le d\le \Delta_G$ let $B^G_d := \{v \in V \mid \deg_G(v)=d\}$ be the \emph{block} of degree~$d$, that is, the set of all vertices with degree~$d$ in~$G$.
Thus, being \kAnonymous is equivalent to each block being of size either zero or at least~$k$.
For a set~$S$ of edges with endpoints in a graph~$G$, we denote by~$G+S$ the graph that results from inserting all edges from~$S$ into~$G$.
We call~$S$ an \emph{edge insertion set} for~$G$, and if~$G+S$ is \kAnonymous, then it is an \emph{\kInsertSet}.

A \emph{degree sequence}~$\D$ is a multiset of positive integers and $\Delta_\D$ denotes its maximum value.
The degree sequence of a graph~$G$ with vertex set~$V = \{v_1, \ldots, v_n\}$ is~$\D_G := \{\deg_G(v_1), \ldots, \deg_G(v_n)\}$.
For a degree sequence~$\D$, we denote by~$b_d$ how often value~$d$ occurs in~$\D$ and we set $\B=\{b_0,\ldots,b_{\Delta_\D}\}$ to be the \emph{\blockSeq} of~$\D$, that is, $\B$ is just the list of the block sizes of~$G$.
Clearly, the block sequence of a graph~$G$ is the block sequence of $G$'s degree sequence.
The \blockSeq can be viewed as a compact representation of a degree sequence (just storing the amount of vertices for each degree) and we use these two representations of vertex degrees interchangeably.
Equivalently to graphs, a block sequence is \kAnonymous if each value is either zero or at least~$k$ and a degree sequence is \kAnonymous if its corresponding block sequence is \kAnonymous.

Let $\D=\{ d_1, \ldots, d_n\}$ and $\D' =\{ d'_1, \ldots, d'_n\}$ be two degree sequences with corresponding block sequences~$\B$ and~$\B'$.
We define $\|\B\| = |\D| = \sum_{i=1}^n d_i$.
We write $\D'\ge \D$ and $\B'\ogreaterthan \B$ if for both degree sequences---sorted in ascending order---it holds that $d'_i\ge d_i$ for all~$i$.
Intuitively, this captures the interpretation ``$\D'$ can be obtained from~$\D$ by increasing some values''.
If $\D'\ge \D$, then (for sorted degree sequences) we define the degree sequence $\D'-\D =\{ d'_1-d_1, \ldots, d'_n-d_n\}$ and set $\B' \circleddash \B$ to be its block sequence.
We omit sub- and superscripts if the graph is clear from the context.

%
%
%

\section{Description of the Algorithm Framework}\label{sec:alg-desc}

In this section we present the details of our algorithm framework to solve \kDegAnon.
We first provide a general description how the problem is split into several subproblems (basically corresponding to the two-phase approach of \citet{LT08}) and then describe the corresponding algorithms in detail.

\subsection{General Framework Description}


We first provide a more formal description of the two-phase approach due to \citet{LT08} and then describe how we refine it:
Let $(G=(V,E),k)$ be an input instance of \kDegAnon.
\begin{description}
 \item[Phase~1:] 
  For the degree sequence~$\D$ of~$G$, compute a \kAnonymous degree sequence~$\D'$ such that $\D'\ge \D$ and $|\D-\D'|$ is minimized.
 \item[Phase~2:] Try to realize~$\D'$ in~$G$, that is, try to find an edge insertion set~$S$ such that the degree sequence of~$G+S$ is~$\D'$.
\end{description}
The minimum \kAnonymization cost of~$\D$, formally $|\D'-\D|/2$, is a lower bound on the number of edges in a  \kInsertSet for~$G$.
Hence, if succeeding in Phase~2 to realize~$\D'$, then a minimum-size \kInsertSet~$S$ for~$G$ has been found.

\citet{LT08} gave a dynamic programming algorithm which exactly solves Phase~1 and they provided the so-called local exchange heuristic algorithm for~Phase~2.
If Phase~2 fails, then the heuristic of \citet{LT08} relaxes the constraints and tries to find a \kInsertSet yielding a graph ``close'' to~$\D'$.

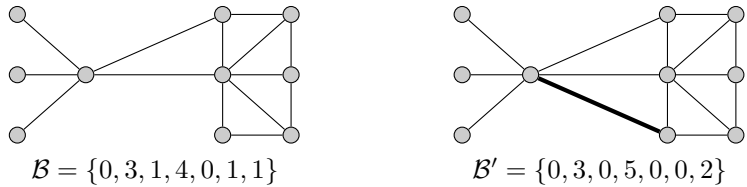
\begin{figure}[t]
	\centering
		\tikzstyle{vertex}=[circle,draw=black,minimum size=6pt,inner sep=0.8pt,fill=black!20]
		\begin{tikzpicture}[xscale=0.9,yscale=0.8]
			\foreach[count=\i] \x/\y in {0/0,0/1,0/2,1/1,3/1,3/0,3/2,4/2,4/1,4/0}
				\node[vertex](v\i) at (\x,\y) {};
			\node at (2,-0.6) {$\B =\{ 0,3,1,4,0,1,1\}$};
			\foreach \from/\to in {1/4,2/4,3/4,4/5,4/7,5/6,5/7,5/8,5/9,5/10,6/10,10/9,9/8,7/8}
				\draw (v\from) -- (v\to);
			\begin{scope}[xshift=6.5cm]
				\foreach[count=\i] \x/\y in {0/0,0/1,0/2,1/1,3/1,3/0,3/2,4/2,4/1,4/0}
					\node[vertex](u\i) at (\x,\y) {};
				\node at (2,-0.6) {$\B' =\{ 0,3,0,5,0,0,2\}$};
				\foreach \from/\to in {1/4,2/4,3/4,4/5,4/7,5/6,5/7,5/8,5/9,5/10,6/10,10/9,9/8,7/8}
					\draw (u\from) -- (u\to);
			\end{scope}
			\draw[ultra thick] (u4) -- (u6);
		\end{tikzpicture}\vspace{-5pt}
	\caption{A graph (left side) with block sequence~$\B$ that can be 2-anonymized by adding one edge (right side) resulting in~$\B'$. Another 2-anonymous \blockSeq (also of cost~two) that will be found by the dynamic programming is~$\B'' =\{ 0,2,2,4,0,0,2\}$.
	The realization of~$\B''$ in~$G$ would require to add an edge between a degree-five vertex (there is only one) and a degree-one vertex, which is impossible.}\label{fig:exampleMultipleMinSolutionsNotAllRealizable}%
\end{figure}%
We started with a straightforward implementation of the dynamic programming algorithm and the local exchange heuristic.
We encountered the problem that, even when iterating through all minimum \kAnonymous degree sequences~$\D'$, one often fails to realize~$\D'$ in Phase~2. 
More importantly, we observed the difficulty that iterating through all minimum sequences is often to time consuming because the same sequence is recomputed multiple times.
This is because the dynamic program iterates through all possibilities to choose ``sections'' of consecutive degrees in the (sorted) degree sequence~$\D$ that end up in the same block in~$\D'$.
These sections have to be of length at least~$k$ (the final block has to be full) but at most~$2k-1$ (longer sections can be split into two).
However, if there is a huge block~$B$ (of size~$\gg2k$) in~$\D$, then the algorithm goes through all possibilities to split~$B$ into sections, although it is not hard to show that at most~$k-1$ degrees from each block are increased.
 Thus, 
different ways to cut these degrees into sections result in the same degree sequence.

We thus redesigned the dynamic program for Phase~1. 
The main idea is to consider the block sequence of the input graph and exploiting the observation that at most~$k-1$ degrees from a block are increased in a minimum-size solution.
Therefore, we avoid to partition one block into multiple sections and the running time dependence on the number of vertices~$n$ can be replaced by the maximum degree~$\Delta$, yielding a significant performance increase.

We also improved the lower bound provided by $\D'-\D$ on the \kAnonymization cost of~$G$.
To this end, the basic observation was that while trying to realize one of the minimum \kAnonymous sequences~$\D'$ in~Phase~2 (failing in almost all cases), we encountered that by a simple criterion on the sequence $\D'-\D$ one can even prove that~$\D'$ is not realizable in~$G$.
That is, a \kInsertSet~$S$ for~$G$ corresponding to~$\D'$ would induce a graph with degree sequence~$\D'-\D$.
Hence, the requirement that there is a graph with degree sequence~$\D'-\D$ is a necessary condition to realize~$\D'$ in~$G$ in~Phase~2.
Thus, for increasing cost~$c$, by iterating through all \kAnonymous sequences~$\D'$ with~$|\D'-\D|=c$ and excluding the possibility that~$\D'$ is realizable in~$G$ by the criterion on~$\D'-\D$, one can step-wisely improve the lower bound on the \kAnonymization cost of~$G$.
We apply this strategy and thus our dynamic programming table allows to iterate through all \kAnonymous sequences~$\D'$ with $|\D'-\D|=c$.
Unfortunately, even this criterion might not be sufficient because the already present edges in~$G$ might prevent the insertion of a \kInsertSet which corresponds to~$\D'-\D$ (see \autoref{fig:exampleMultipleMinSolutionsNotAllRealizable} for an example).
We thus designed a test which not only checks whether~$\D'-\D$ is realizable but also takes already present edges in~$G$ into account while preserving that $|\D'-\D|$ is a lower bound on the \kAnonymization cost of~$G$.
With this further requirement on the resulting sequences~$\D'$ of Phase~1, in our experiments we observe that Phase~2 of realizing~$\D'$ in~$G$ is in 26\,\% of the real-world instances successful.
Hence, 26\,\% of the instances can be solved~optimally.
See \Cref{ssect:step1} for a detailed description of our algorithm for Phase~1.

For Phase~2 the task is to decide whether a given \kAnonymization~$\D'$ can be realized in~$G$. 
As we will show that this problem is NP-hard, we split the problem into two parts and try to solve each part separately by a heuristic.
First, we find a degree-vertex mapping, that is, we assign each degree $d'_i\in \D'$ to a vertex~$v$ in~$G$ such that $d'_i\ge \deg_G(v)$. Then, the demand of vertex~$v$ is set to~$d'_i-\deg_G(v)$.
Second, given a degree-vertex mapping with the corresponding demands
we try the find an edge insertion set such that the number of incident new edges for each vertex is equal to its demand.
While the second part could in principle be done optimally in polynomial-time by solving an $f$\dash{}factor problem~\cite{HNNS13}, we show that already a heuristic refinement of the ``local exchange'' heuristic due to~\citet{LT08} is able to succeed in most cases. 
Thus, theoretically and also in our experiments, the ``hard part'' is to find a good degree-vertex mapping.
Roughly speaking, the difficulties are that, according to~$\D'$, there is more than one possibility of how many vertices from degree~$i$ are increased to degree~$j>i$.
Even having settled this it is not clear which vertices to choose from block~$i$.
See \Cref{ssec:phase-2-des} for a detailed description of our algorithm for Phase~2.

\subsection{Phase~1: Exact $k$-Anonymization of Degree Sequences}\label{ssect:step1}
We start with providing a formal problem description of $k$\dash{}anonymizing a degree sequence~$\D$ and describe our dynamic programming algorithm to find such sequences~$\D'$.
We then describe the criteria that we implemented to improve the lower bound $|\D'-\D|$. 
%

\smallskip\noindent{\bf Basic Number Problem.}
The decision version of the degree sequence anonymization problem reads as follows.
\decprob{\kDegSeqAnon (\kDSA)}
	{A \blockSeq $\B$ and integers~$k,s\in \N$.}
	{Is there a \kAnonymous \blockSeq $\B'\ogreaterthan\B$ such that \mbox{$\|\B'\circleddash\B\|=s$?}}
The requirements on~$\B'$ in the above definition ensure that~$\B'$ can be obtained by performing exactly~$s$ many increases to the degrees in~$\B$.
\citet{LT08} give a dynamic programming algorithm that solves \kDSA optimally in~$\bigO(nk)$ time and space.
Here, besides using block instead of degree sequences, we added another dimension to the dynamic programming table storing the cost of a solution.

\begin{lemma} \label{lem:dynProgForkDSA}
\kDegSeqAnon can be solved in $\bigO(\Delta \cdot k^2 \cdot s)$ time and $\bigO(\Delta \cdot k \cdot s)$ space.
\end{lemma}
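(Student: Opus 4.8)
The plan is to design a dynamic program over the blocks of the block sequence, processing degrees from the largest down to the smallest (or equivalently, sorted degrees from right to left). The key structural insight, already highlighted in the surrounding text, is that in a minimum-cost solution at most $k-1$ vertices from any single block are increased: if $k$ or more vertices of degree $d$ would be raised, they could instead all stay at degree $d$ (the block is already full), saving cost. This caps the ``overflow'' out of each block by $k-1$, which is exactly what replaces the $n$-dependence by a $\Delta$-dependence. So the table will be indexed by a current degree~$d$ (ranging over $0,\dots,\Delta_\D$, giving the $\Delta$ factor), by the number of vertices that are being ``carried in'' to degree~$d$ from lower degrees to be merged into a block at degree~$d$ or higher, and by the cost~$s$ accumulated so far.

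Concretely, I would let $T[d, r, c]$ be true if it is possible to $k$-anonymize the suffix of the block sequence consisting of degrees $0,\dots,d$, spending total cost exactly $c$, such that $r$ vertices currently sitting at degree~$d$ still need to be promoted to some higher degree (their final block is not yet decided and will be formed at a degree $>d$). The transition at degree~$d$ decides how the $b_d + r_{\text{in}}$ vertices present at degree~$d$ (the $b_d$ native ones plus those carried up from below) are handled: some number form a completed block here (which must be of size~$0$ or at least~$k$, and at most $2k-1$ vertices need ever be assigned to a single newly-formed block since a larger group splits), while the remaining ones—at most $k-1$ by the observation—are carried further up, incurring cost equal to the number of unit increments. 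The cost contributed when raising a vertex from degree $d$ to the degree where its block is finally completed is the degree difference; since each carried vertex advances one degree per step, I would account for the cost incrementally, adding (number carried up out of degree~$d$) to~$c$ at each transition.

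For the running time I would argue as follows. The degree index~$d$ contributes a factor $\Delta$, the cost index~$c$ contributes a factor $s$, and the carry index~$r$ ranges over $0,\dots,k-1$ contributing a factor $k$; this already gives $\bigO(\Delta k s)$ table entries, matching the claimed space bound. Each transition must choose how many of the incoming vertices to finalize into a block versus carry onward; the number of relevant choices for forming a completed block is $\bigO(k)$ (since a block uses between $k$ and $2k-1$ vertices before the remainder splits off), so evaluating all transitions out of a fixed entry costs an additional $\bigO(k)$ factor, yielding the total $\bigO(\Delta k^2 s)$ time. I would fill the table in order of decreasing~$d$, and read off the answer as whether $T[0, 0, s]$ is reachable, i.e.\ whether all vertices have been assigned to completed $k$-anonymous blocks at total cost exactly~$s$.

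The main obstacle I anticipate is getting the carry/cost bookkeeping in the transition exactly right, particularly two subtleties: first, correctly enforcing that every newly formed block has size $0$ or at least~$k$ (the $k$-anonymity constraint) while simultaneously bounding the carried quantity by $k-1$, which requires the careful observation that one never needs to promote $k$ or more vertices out of a block in an optimal solution; and second, verifying that charging the cost incrementally (one unit per vertex per degree-step) correctly reproduces $\|\B'\circleddash\B\|$, the total number of increments. Establishing that the $\bigO(k^2)$ per-degree transition is both necessary and sufficient—i.e.\ that restricting block-formation sizes to the window $[k, 2k-1]$ and carries to $[0,k-1]$ loses no optimal solution—is the crux of the correctness argument, and I would prove it via an exchange argument showing any optimal solution can be normalized into this form without increasing cost.
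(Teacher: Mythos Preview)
Your approach is essentially the paper's: a DP over the blocks indexed by the current degree~$d$, a ``carry'' parameter counting vertices that must still be promoted past~$d$, and the exact cost~$c$; the paper's table $T[i,t,c]$ has the same three indices, and it charges cost via a precomputed $\cost(i{-}1,t')$ rather than incrementally, but that is immaterial.

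There is, however, a real gap in your carry bound. You restrict the carry $r$ to $0,\ldots,k-1$, justified by the exchange argument ``if $k$ or more vertices of degree~$d$ would be raised, they could instead all stay at degree~$d$ (the block is already full).'' This argument is false: leaving those $k$ vertices at degree~$d$ may strand other vertices higher up. Concretely, take $k=2$ and block sequence $\B=(b_0,b_1)=(2,1)$. The unique minimum $k$\dash anonymization is $\B'=(0,3)$ at cost~$2$, which requires promoting both degree\nobreakdash-0 vertices; the carry out of degree~$0$ is $2=k$. With $r\le k-1=1$ your DP is forced to keep the block of size~$2$ at degree~$0$, after which the lone vertex at degree~$1$ can never be absorbed, and the DP reports no solution. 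The paper therefore takes $0\le t<2k$; the justification is the Liu--Terzi section bound (any optimal anonymization can be written with sections of size at most $2k-1$, so the portion of a section below any degree threshold is at most $2k-1$), not the per\dash block argument you give. Since $2k$ is still $O(k)$, your space and time bounds survive, but the DP as you specified it is incorrect.

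A minor point: your indexing is inconsistent. You define $T[d,r,c]$ over the prefix $0,\ldots,d$ (so the base case is $d=0$ and the answer sits at $d=\Delta$), but then say you fill in decreasing order of~$d$ and read off $T[0,0,s]$. The paper fills upward in~$i$ and reads $T[\Delta,0,s]$, which is what your semantics also dictates.
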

{
\begin{proof}
Let $(\B,k,s)$ be an instance of \kDegSeqAnon.
We describe a dynamic programming algorithm.
The algorithm maintains a table~$T$ where the entry $T[i,t,c]$ with $0\le i\le \Delta$, $0\le c\le s$, and $0\le t<2k$ is true
if and only if the \blockSeq $\B(i) =\{B_0,\ldots,B_i\}$ minus the last~$t$ degrees can be \kAnonymized with cost exactly~$c$.
Formally, $\B(i)$ minus the last~$t$ degrees is the block sequence~$\B'(i)$ corresponding to the degree sequence~$\D$ that is obtained from~$\D_{\B(i)}$ by removing the~$t$ highest degrees.
For we $0\le i < \Delta$ we denote by $\cost(i,t)$ the cost to increase the last~$t$ degrees in $\B(i)$ to~$i+1$.
We compute~$T[i,t,c]$ with the following recursion.
\[
 T[i,t,c]=
\begin{cases}
	c = 0 \wedge (|B_0|-t = 0 \vee |B_0|-t \ge k), & i=0 \\
	\!\begin{aligned}
		\exists t' \in \N \colon & k-(|B_i|-t)\le t'<2k \; \wedge \\
		& T[i-1,t',c-\cost(i-1,t')]=\text{true,} %
	\end{aligned} & |B_i|>t \\
  T[i-1,t-|B_i|,c],& |B_i|\le t.
\end{cases}
\]
The entry~$T[\Delta,0,s]$ is true if and only if~$(\B,k,s)$ is a yes-instance.
We compute the $\cost$-function in a preprocessing step in~$\bigO(\Delta \cdot k^2)$ time and~$\bigO(\Delta \cdot k)$ space.
Having computed the~$\cost$-function each table entry in~$T$ can be computed in~$\bigO(k)$ time.
As there are~$\Delta \cdot k \cdot s$ table entries, the overall running time is~$\bigO(\Delta \cdot k^2 \cdot s)$.

As to the correctness, observe that if~$i = 0$ in the recursion, then $c$ has to be positive and exactly~$t$ degrees of block~$B_0$ have to be increased by at least one, which is possible only if $|B_0|-t = 0$ or $|B_0|-t \ge k$.
If~$i > 0$ and~$|B_i| \le t$, then clearly all degrees in the block~$B_i$ have to be increased.
If~$i > 0$ and~$|B_i| > t$, then there remain some degrees in the block~$B_i$.
Thus, the block has to be of size at least~$k$ implying that at least~$\min(0, k - (|B_i|-t))$ degrees have to be added to~$B_i$.
Furthermore, adding more than~$2k$ vertices is not necessary, as in this case we would add only~$k$ vertices to ensure that both the blocks~$B_i$ and~$B_{i-1}$ are large enough, that is, $|B'_i| \ge k$ and~$|B'_{i-1}|\ge k$.
The correctness now follows from the fact that the recursion tries all possibilities for the value~$t'$ between~$\min(0, k - (|B_i|-t))$ and~$2k$.\qed
\end{proof}}
There might be multiple minimum solutions for a given \kDSA instance while only one of them is realizable, see \autoref{fig:exampleMultipleMinSolutionsNotAllRealizable} for an example.
Hence, instead of just computing one minimum-size solution, we iterate through these minimum-size solutions until one solution is realizable or \emph{all} solutions are tested.
Observe that there might be exponentially many minimum-size solutions:
In the \blockSeq $\B =\{ 0,3,1,3,1,\ldots,3,1,3\}$, for~$k=2$, each subsequence $3,1,3$ can be either changed to $2,2,3$ or to~$3,0,4$.
%
We use a data reduction rule (see end of this subsection) to reduce the amount of considered solutions~in~such~instances.

\smallskip\noindent{\bf \boldmath Criteria on the Realizability of \kDSA Solutions.}
A difficulty in the solutions provided by Phase~1, encountered in our preliminary experiments and as already observed by \citet{LSB12} on a real-world network, is the following:
If a solution increases the degree of one vertex~$v$ by some amount, say 100, and the overall number of vertices with increased degree is at most~$100$, then there are not enough neighbors for~$v$ to realize the solution.
We overcome this difficulty as follows: For a \kDSA-instance~$(\B,k)$ and a corresponding solution~$\B'$, let~$S$ be a \kInsertSet for~$G$ such that the block sequence of~$G+S$ is~$\B'$.
By definition, the block sequence of the graph induced by the edges~$S$ is~$\B'\circleddash\B$.
Hence, it is a necessary condition (for success in Phase~2) that $\B'\circleddash \B$ is a \emph{realizable} block sequence, that is, there is a graph with block sequence~$\B'\circleddash\B$.
\citet{TV03} have shown that it is enough to check to following \emph{Erd\H{o}s-Gallai characterization} of realizable degree sequence just once for each block.

\begin{lemma}[\cite{EG60}]\label{lem:erdos-gallai}
	Let $\D =\{ d_1, \ldots, d_n\}$ be a degree sequence sorted in descending order.
	Then~$\D$ is realizable if and only if $\sum_{i=1}^n d_i$ is even and for each $1 \le r \le n-1$ it holds that
	\begin{align}
		\sum_{i=1}^{r} d_i \le r (r-1) + \sum_{i = r+1}^{n} \min(r, d_i). \label{eq:erdos-gallai}
	\end{align}
\end{lemma}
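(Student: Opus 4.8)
The statement is the classical Erd\H{o}s--Gallai theorem, so the plan is to give a self-contained proof of both implications; only the ``if'' direction is substantial.

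\emph{Necessity} I would obtain by a direct counting argument. Assume~$\D$ is realized by a simple graph~$G$ on vertices~$v_1,\ldots,v_n$ with $\deg_G(v_i)=d_i$. First, $\sum_i d_i$ counts every edge twice, so it is even. Then fix~$r$ and let $A=\{v_1,\ldots,v_r\}$ be the~$r$ highest-degree vertices. Writing~$e(A)$ for the number of edges inside~$A$ and~$e(A,\overline{A})$ for the edges leaving~$A$, I would split $\sum_{i=1}^r d_i = 2\,e(A)+e(A,\overline{A})$. Since~$G$ is simple we have $2\,e(A)\le r(r-1)$, and each~$v_j$ with $j>r$ sends at most $\min(r,d_j)$ edges into~$A$ (at most~$r$ because $|A|=r$, and at most~$d_j$ because that is its degree). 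Summing the second term over~$j>r$ yields \eqref{eq:erdos-gallai}.

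\emph{Sufficiency} is the hard direction, and here the plan is an induction on $\|\D\|=\sum_i d_i$ via an edge-by-edge reduction. If $\|\D\|=0$ the empty graph works. Otherwise I would decrement two suitably chosen entries by one---the entry of maximum value together with one further entry picked so that all inequalities are preserved---and re-sort to obtain a sequence~$\D^\ast$ with $\|\D^\ast\|=\|\D\|-2$ and even sum. After verifying that~$\D^\ast$ still satisfies every Erd\H{o}s--Gallai inequality, the inductive hypothesis supplies a simple graph~$H$ realizing~$\D^\ast$. Let~$u,w$ be the two vertices whose target degrees were decremented. If~$u$ and~$w$ are non-adjacent in~$H$, then adding the edge~$uw$ realizes~$\D$ and closes the induction.

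Two points carry the difficulty, and both deserve care. First, the choice of which entries to decrement must guarantee that for every~$r$ the left-hand side of \eqref{eq:erdos-gallai} drops at least as fast as the right-hand side; the inequalities indexed near the decremented positions are exactly where a naive choice fails, so this verification is the technical heart of the argument. Second, the case where~$u$ and~$w$ are already adjacent in~$H$ must be repaired by an edge swap: a short degree count produces a vertex~$x$ adjacent to one of~$u,w$ but not the other, and exchanging edges along the resulting alternating configuration frees the pair~$uw$ without altering any vertex degree, after which the edge~$uw$ can be inserted. Combining these two repairs with the counting argument for necessity gives the stated equivalence.
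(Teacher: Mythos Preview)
The paper does not give its own proof of this lemma; it simply quotes the classical Erd\H{o}s--Gallai theorem with a citation to~\cite{EG60} and immediately moves on to use it as a black-box ``Erd\H{o}s--Gallai test.'' So there is no in-paper argument to compare your proposal against.

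As for the proposal itself: your necessity direction is the standard correct counting argument. Your sufficiency plan follows the Choudum-style induction (reduce $d_1$ and one further carefully chosen $d_t$ by one, realize inductively, then add an edge or perform a $2$-switch). That is a legitimate route, but note that what you have written is explicitly a \emph{plan}, not a proof: you flag the verification that $\D^\ast$ still satisfies all inequalities as ``the technical heart'' and leave it undone, and your description of the repair step is imprecise. Finding a single vertex~$x$ adjacent to one of~$u,w$ but not the other is not by itself enough to free the pair~$uw$ while preserving all degrees; what you actually need is a full $2$-switch, i.e.\ an edge~$xy$ disjoint from~$\{u,w\}$ with~$ux$ and~$wy$ both non-edges, and the existence of such a pair requires a more careful degree comparison that depends on exactly which index~$t$ you decremented. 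Both gaps are fillable along the lines of Choudum's 1986 proof, but as written the sufficiency argument is a sketch rather than a proof.
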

 We call the characterization provided by \autoref{lem:erdos-gallai} the \emph{Erd\H{o}s-Gallai test}. Unfortunately, there are \kAnonymous sequences~$\D'$, passing the  Erd\H{o}s-Gallai test, but still or not realizable in the input graph~$G$ (see \autoref{fig:exampleMultipleMinSolutionsNotAllRealizable} for an example).

We thus designed an advanced version of the  Erd\H{o}s-Gallai test that takes the structure of the input graph into account.
To explain the basic idea behind, we first discuss how \cref{eq:erdos-gallai} in~\autoref{lem:erdos-gallai} can be interpreted:
Let $V^r$ be the set of vertices corresponding to the first $r$~degrees.
The left-hand side sums over the degrees of all vertices in~$V^r$.
This amount has to be at most as large as the number of edges (counting each twice) that can be ``obtained'' by making~$V^r$ a clique ($r(r-1)$) and the maximum number of edges to the vertices in~$V \setminus V^r$ (a degree\dash{}$d_i$ vertex has at most~$\min\{d_i,r\}$ neighbors in~$V^r$).
The reason why the Erd\H{o}s-Gallai test might not be sufficient to determine whether a sequence can be realized in~$G$ is that it ignores the fact that same vertices in~$V^r$ might be already adjacent in~$G$ and it also ignores the edges between vertices in~$V^r$ and $V \setminus {V^r}$.
Hence, the basic idea of our \emph{advanced Erd\H{o}s-Gallai test} is, whenever some of the vertices
corresponding to the degrees
can be uniquely determined, to subtract the corresponding number of edges as they cannot contribute to the right-hand side of \cref{eq:erdos-gallai}.

While the difference between using just the  Erd\H{o}s-Gallai test and the advanced Erd\H{o}s-Gallai test resulted in rather small differences for the lower bound (at most 10 edges), this small difference was important for some of our instances to succeed in Phase~2 and to optimally solve the instance.
We think that further improving the advanced Erd\H{o}s-Gallai test is the best way to improve the rate of success in Phase~2.

\smallskip\noindent{\bf Complete Strategy for Phase~1.}
With the above described restriction for realizable $k$\dash{}anonymous degree sequences, we finally arrive at the following problem for Phase~1, stated in the optimization form we solve:
\optprob{\textsc{\kDegSeqRealAnon} (\kDSRA)}
	{A degree sequence $\B$ and an integer~$k\in \N$.}
	{Compute all \kAnonymous degree sequences $\B'$ such that
$\B'\ogreaterthan\B$, $\|\B'\circleddash\B\|$ is minimum, and $\B'\circleddash\B$ is realizable.
}
Our strategy to solve \kDSRA is to iterate (for increasing solution size) through the solutions of \kDSA and run for each of them the advanced Erd\H{o}s-Gallai test.
Thus, we step-wisely increase the respective lower bound $\B'-\B$ until we arrive at some~$\B'$ passing the test.
Then, for each solution of this size we test in~Phase~2 whether it is realizable (if so, then we found an optimal solution).
If the realization in Phase~2 fails, then, for each such \blockSeq~$\B'$, we compute how many degrees have to be ``wasted'' in order to get a realizable sequence.
Wasting means to greedily increase some degrees in~$\B'$ (while preserving $k$\dash{}anonymity) until the resulting degree sequence is realizable in the input graph.
The cost~$\B'-\B$ plus the amount of degrees needed to waste in order to realize~$\B'$ is stored as an upper-bound.
A minimum upper-bound computed in this way is the result of~our~heuristic.

Due to the power law degree distribution in social networks, the degree of most of the vertices is close to the average degree, thus one typically finds in such instances two large blocks~$B_i$ and~$B_{i+1}$  containing many thousands of vertices.
Hence, ``wasting'' edges is easy to achieve by increasing degrees from~$B_i$ by one to~$B_{i+1}$ (this is optimal with respect to the Erd\H{o}s-Gallai characterization).
For the case that two such blocks cannot be found, as a fallback we also implemented a straightforward dynamic programming to find all possibilities to waste edges to obtain a realizable sequence.

\medskip
\emph{Remark.}
	We do not know whether the decision version of \kDSRA (find only one such solution~$\B'$) is polynomial-time solvable and resolving this question remains as challenge for future research.

\smallskip\noindent{\bf \boldmath Data Reduction Rule.}
In our preliminary experiments we observed that for some instances we could not finish Phase~1 even for~$k=2$ within a time limit of one hour.
Our investigations revealed that this is mainly due to the frequent occurrence of the following ``pattern'' within three consecutive blocks: The first block~$i$ and the third block~$i+2$ are each of size at least~$2k-1$ and the middle block consists of~$k-1$ degrees. For example, for $k=2$ consider the consecutive blocks~$4,1,4$.
The details of the dynamic program (see \autoref{ssect:step1}) show that in any solution for the entire block sequence block~$i$ and~$i+2$ stay full and either block~$i+1$ is filled by the degrees from block~$i$ or they are increased to block~$i+2$. In our example this means that the solution is either $4,0,4$ or $3,2,4$ (block~$i+2$ could contain less degrees but at least~$k$).
Then, if this pattern occurs multiple, say~$x$, times, then there~$2^x$ different solutions in Phase~1.
However, as can be observed in our example, with respect to the Erd\H{o}s-Gallai test whether the resulting sequence is realizable both solutions, $4,0,4$ and~$3,2,4$, are equivalent because they increase just one degree by one.
The general idea of our data reduction rule is to find these patterns where the first and last block are ``large'' enough to guarantee that the degrees of preceding blocks are not increased to the middle of the pattern and it is not necessary to increase something from the middle of the pattern to succeeding blocks.
Hence, the middle of the pattern can be solved ``independently'' from preceding and succeeding blocks and if there is a minimum solution which is ``Erd\H{o}s-Gallai-optimal'' (increasing degrees by at most one), then it is safe to take one of them for the middle of the pattern.
Formally, our data reduction rule, generalizing the above ideas, is as follows.

\begin{samepage}\begin{rrule}\label{rule:data-rule}
Let $(\B,k)$ be an instance of~\kDSRA.
 If there is a block~$B_i$ in~$\B$ with $b_i\ge k$, a sequence of blocks $B_j,B_j+1,\ldots,B_{j+t}$ such that $\sum_{l=j}^{j+t} b_l\ge (t+1)k+ k-1$ and $b_l\ge k$ for all $l\in \{j,\ldots,j+t\}$, and if there is minimum-size \kAnonymization~$\B'_{i,j}$ of the block sequence~$B_{i,j}=B_i,B_{i+1},\ldots,B_j$ such that i)~all blocks~$l$ for $l\ge 2$ in $\B'_{i,j}\circleddash \B_{i,j}$ are empty and~ii)~block zero in $\B'_{i,j}$ is of size at least~$k$, then substitute in~$\B$ the subsequence~$\B_{i,j}$ by~$\B'_{i,j}$.
\end{rrule}
\end{samepage}
In our implementation of \autoref{rule:data-rule} we use our dynamic programming algorithm (with disabled  Erd\H{o}s-Gallai test) to check whether there is a \kAnonymization~$\B'_{i,j}$ for~$\B_{i,j}$ fulfilling the required properties.

\subsection{Phase~2: Realizing a $k$-Anonymous Degree Sequence}\label{ssec:phase-2-des}

Let $(G,k)$ be an instance of \kDegAnon and let~$\B$ be the block sequence of~$G$.
In Phase~1 a \kAnonymization~$\B'$ of~$\B$~is computed such that $\B'\ogreaterthan \B$.
In Phase~2, given~$G$ and~$\B'$, the task is to decide whether there is a set~$S$ of edge insertions for~$G$ such that the block sequence of~$G+S$ is equal to~$\B'$. We call this the \textsc{\RealProb} problem and first prove that it is NP-hard.

\begin{theorem}\label{thm:realprob-np-hard}
\RealProb is NP-hard even on cubic planar graphs.
\end{theorem}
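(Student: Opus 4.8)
The plan is to reduce from \IS restricted to cubic planar graphs, which is \NP-hard. The key observation is that in \RealProb the target is only a \emph{block sequence}~$\B'$, i.e.\ a multiset of degrees: a solution is free to decide \emph{which} vertex attains which target degree. As the paper already notes, once this degree-vertex assignment is fixed the remaining task is an $f$-factor problem and hence polynomial-time solvable, so all hardness must, and will, be encoded in this assignment freedom.

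Let $(G_0,\kappa)$ be an instance of \IS with $G_0$ cubic planar on $N$ vertices. I would take the input graph to be $G:=G_0$ itself (it is already cubic planar, so nothing has to be constructed) and define the target block sequence~$\B'$ to request exactly $\kappa$ vertices of degree~$\kappa+2$ and the remaining $N-\kappa$ vertices of degree~$3$ (formally $b'_3=N-\kappa$, $b'_{\kappa+2}=\kappa$, and all other entries~$0$). Since every current degree is~$3$ and $\kappa+2\ge 3$, the dominance condition required in the definition of \RealProb holds. The only technical point is that the target degree~$\kappa+2$ must be attainable, i.e.\ $\kappa+2\le N-1$; but for $N\ge 5$ every independent set of a cubic graph has size at most~$N-3$ (any two vertices cover at most six edges, whereas $G_0$ has $3N/2>6$ edges), so instances with $\kappa>N-3$ are trivial no-instances and we may assume $\kappa\le N-3$.

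Correctness hinges on a ``freezing'' argument. In any edge insertion set~$S$, each of the $N-\kappa$ vertices assigned target degree~$3$ has \emph{demand} zero and is therefore incident to no edge of~$S$; consequently every edge of~$S$ runs between the $\kappa$ vertices assigned degree~$\kappa+2$. Call this set~$A$. Each vertex of~$A$ must gain $\kappa-1$ new edges, necessarily going to the other $\kappa-1$ vertices of~$A$ and avoiding the edges already present in~$G_0$. This is possible if and only if the complete graph on~$A$ is edge-disjoint from~$G_0$, i.e.\ $A$ is an independent set of size~$\kappa$ in~$G_0$. Conversely, from an independent set~$A$ of size~$\kappa$ one inserts all $\binom{\kappa}{2}$ edges within~$A$ (none of which lies in~$G_0$), yielding a graph whose block sequence is exactly~$\B'$. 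Hence $\B'$ is realizable in~$G$ iff $G_0$ has an independent set of size~$\kappa$, and the construction is clearly polynomial-time computable.

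I expect the main obstacle to be making the freezing argument airtight, that is, ruling out that the block-sequence freedom is exploited to ``cheat'' globally---for example by letting a vertex of intended degree~$3$ secretly absorb and compensate edges, or by spreading the high degrees over a cleverly chosen non-independent set. Both are excluded precisely because demand-zero vertices admit no incident new edge, and because reaching degree~$\kappa+2$ for \emph{all} of~$A$ forces $A$ to induce a clique in $G+S$, whose new edges must avoid~$G_0$ and therefore witness independence in~$G_0$. A secondary, purely bookkeeping obstacle is the degree-feasibility condition $\kappa+2\le N-1$, handled as above. I would finally remark that this construction uses no $k$-anonymity constraint whatsoever, which is why the statement can be phrased for the bare realization problem.
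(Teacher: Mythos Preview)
Your proposal is correct and follows essentially the same route as the paper: both reduce from \IS on cubic planar graphs by taking $G=G_0$ unchanged and setting the target block sequence to have $N-\kappa$ vertices of degree~$3$ and $\kappa$ vertices of degree~$\kappa+2$, then arguing that demand-zero vertices are frozen so the $\kappa$ high-degree vertices must form a clique in $G+S$ and hence an independent set in~$G_0$. Your treatment is slightly more careful than the paper's in that you explicitly address the feasibility condition $\kappa+2\le N-1$, which the paper leaves implicit.
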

{\begin{proof}
	We prove the NP-hardness by a reduction from the \IndSet problem:
	Given a graph~$G=(V,E)$ and an integer~$\size$, decide whether there is a set of at least~$\size$ pairwise non-adjacent vertices in~$G$.
	\IndSet remains NP-hard in cubic planar graphs~\cite[GT20]{GJ79}.

	The reduction, which is similar to those proving that \kDegAnon remains NP-hard on 3-colorable graphs~\cite[Theorem~1]{HNNS13}, is as follows:
	Let~$G$ be a cubic planar and~$\size$ an integer that together form an instance of \IndSet.
	The block sequence of the $n$-vertex graph~$G$ is $\B=\{0,0,0,n\}$.
	We~set \[\B'=\{0,0,0,n-\size,\underbrace{0,\ldots,0}_{\size-2},\size\}\]
	and next prove that the \RealProb-instance $(G,\B')$ is a yes-instance if and only if $(G,\size)$ is a yes-instance for \IndSet.

	If there is an independent set~$S$ (pairwisely non-adjacent vertices) of size~$\size$ in~$G$, then adding all edges between the vertices in~$S$ (making them a clique), results in a graph whose block sequence in~$\B'$.
	Reversely, in a realization of~$\B'$ in~$G$, there are exactly~$\size$ vertices whose degree has been increased by~$\size-1$. Hence, these vertices form a clique, implying that they are independent in~$G$.

	We remark that from the details of the proof of Theorem~1~\cite{HNNS13}, it follows that \RealProb is NP-hard even in case that~$\B'$ is a \kAnonymized sequence such that $\|\B'-\B\|$ is minimum.\qed
\end{proof}}
%
We next present our heuristics for solving \RealProb.
First, we find a degree-vertex mapping, that is, for~$\D'=d'_1,\ldots,d'_n$ being the degree sequence corresponding to~$\B'$, we assign each value~$d'_i$ to a vertex~$v$ in~$G$ such that~$d'_i\ge \deg_G(v)$ and set~$\dem(v)$, the demand of~$v$, to~$d'_i-\deg_G(v)$.
Second, we try to find, mainly by the local exchange heuristic, an edge insertion set~$S$ such that in~$G+S$ the amount of incident new edges for each vertex~$v$ is equal to its demand~$\dem(v)$.
The details in the proof of \autoref{thm:realprob-np-hard} indeed show that already finding a realizable degree-vertex mapping is NP-hard.
This coincides with our experiments, as there the ``hard part'' is to find a good degree-vertex mapping and the local exchange heuristic is quite successful in realizing it (if possible). Indeed, we prove that ``large'' solutions can be always realized~by~it:

\begin{theorem}
A demand function~$\dem$ is always realizable by the local exchange heuristic in a maximum degree-$\Delta$ graph~$G=(V,E)$ if $\sum_{v\in V}\dem(v)\ge 20\Delta^4+4\Delta^2$.
\end{theorem}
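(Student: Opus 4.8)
The plan is to show that the heuristic never gets \emph{stuck}. Every successful move of the local exchange heuristic adds exactly one edge to the current working graph~$G+S$ and lowers the total residual demand~$\sum_{v}\dem(v)$ by exactly two; this quantity is necessarily even for a realizable instance and stays even throughout. Hence the heuristic performs at most $\tfrac12\sum_{v}\dem(v)$ moves and terminates with every demand met, \emph{unless} at some intermediate configuration no move applies. Throughout I rely on the fact that in our setting individual demands do not exceed~$\Delta$, so every vertex keeps degree at most~$2\Delta$, and therefore at most~$2\Delta$ neighbours, in~$G+S$; this is the only place where the dependence on~$\Delta$ enters.

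Two kinds of move are available. A \emph{simple insertion} adds an edge between the currently deficient vertex~$v$ and any other positive-demand vertex not yet adjacent to~$v$. A \emph{swap} is used when~$v$ is deficient but every remaining positive-demand vertex is already adjacent to~$v$ (in particular when~$v$ is the only deficient vertex): one removes a previously inserted edge~$(a,b)\in S$ and reinserts it as~$(v,a)$ and~$(u,b)$ for a positive-demand neighbour~$u$ of~$v$, or as~$(v,a),(v,b)$ in the degenerate single-vertex case. A direct check of degrees shows that such a swap again adds one net edge, lowers the residual demand by two, and leaves~$G+S$ a simple supergraph of~$G$, provided $a\notin N_{G+S}[v]$ and $b\notin N_{G+S}[u]$; it is essential that only edges of~$S$, never of~$G$, are rerouted. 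Thus it suffices to prove that in every reachable configuration at least one of the two moves applies.

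I would then split on the current residual demand $D_r:=\sum_{v}\dem(v)$. If $D_r>2\Delta^2+\Delta$, then, since each demand is at most~$\Delta$, more than $2\Delta+1$ vertices are deficient, which exceeds the number of neighbours of~$v$; hence~$v$ has a deficient non-neighbour and a simple insertion applies. If instead $D_r\le 2\Delta^2+\Delta$, then at most $2\Delta^2+\Delta$ vertices are deficient, so the union~$Z$ of the closed neighbourhoods of all deficient vertices has size~$\bigO(\Delta^3)$, and the number of edges of~$S$ meeting~$Z$ is at most $2\Delta\,|Z|=\bigO(\Delta^4)$. Choosing a swap edge~$(a,b)\in S$ with both endpoints outside~$Z$ makes the swap valid and, being far from the entire active region, certain not to disturb any other demand. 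Such an edge exists because $|S|=\frac{1}{2}\bigl(D_0-D_r\bigr)$, where~$D_0$ is the initial total demand, and the hypothesis $D_0\ge 20\Delta^4+4\Delta^2$ forces~$|S|$ to exceed the $\bigO(\Delta^4)$ count of forbidden edges.

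Finally I would verify that the two cases overlap, so that no reachable value of~$D_r$ escapes both: the swap threshold fails only while~$D_r$ is within~$\bigO(\Delta^4)$ of the huge initial demand, whereas the insertion threshold fails only for $D_r=\bigO(\Delta^2)$, and the bound $20\Delta^4+4\Delta^2$ is chosen precisely so that these two ranges are disjoint (for the finitely many small values of~$\Delta$ one checks the constant directly). The main obstacle is the swap step: pinning down the exact swap rule of the local exchange heuristic, confirming that it preserves simplicity and the supergraph property in the degenerate single-vertex case, and carrying out the counting that bounds the forbidden $S$-edges by~$\bigO(\Delta^4)$. This counting, together with the need to avoid the full $\bigO(\Delta^3)$-size active region, is exactly what produces the fourth-power dependence and fixes the stated constant.
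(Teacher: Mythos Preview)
Your overall two-case architecture---``many deficient vertices $\Rightarrow$ direct insertion'' versus ``few deficient vertices $\Rightarrow$ large $S$ $\Rightarrow$ swap''---is exactly the skeleton of the paper's proof. The genuine gap is the sentence ``in our setting individual demands do not exceed~$\Delta$.'' That is not part of the hypothesis, and it is not true in the paper's setting either: the relevant bound, supplied by a separate lemma (any minimum \kInsertSet yields maximum degree at most $2\Delta^2$), is $\dem(v)+\deg_G(v)\le 2\Delta^2$, not $\dem(v)\le\Delta$. The appendix version of the theorem carries this as an explicit hypothesis; without \emph{some} such bound the statement is simply false (take two vertices each with demand~$n$ in a graph on~$n$ vertices). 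So your first task is to replace the $\Delta$ demand bound by $2\Delta^2$ throughout, which changes every degree count in $G+S$ from $2\Delta$ to $2\Delta^2$.

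Once you make that replacement, your swap analysis becomes far too coarse to recover the constant $20\Delta^4+4\Delta^2$. You insist on choosing the swap edge $\{a,b\}$ outside the closed neighbourhoods of \emph{all} deficient vertices, because you worry about ``disturbing other demands.'' But the swap leaves $\deg(a)$ and $\deg(b)$ unchanged (each loses one edge and gains one), so nothing outside $\{v,u\}$ is disturbed; you only need $a\notin N_{G+S}(v)$ and $b\notin N_{G+S}(u)$ (or the symmetric orientation). The paper therefore fixes just two deficient vertices $v_1,v_2$, counts $S$-edges meeting $N(v_1)\cup N(v_2)$---at most $2\cdot(2\Delta^2)^2=8\Delta^4$ of them---and requires $|S|\ge 8\Delta^4$. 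Combining this with $|V^{\dem}|<2\Delta^2+2$ (the failure of the insertion case) and the per-vertex demand bound $2\Delta^2$ gives exactly $\xi\ge 20\Delta^4+4\Delta^2$. Your argument happens to land on a fourth-power bound only because the too-small demand bound and the too-large swap region partly cancel; with the correct inputs it would overshoot the stated constant substantially.
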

In \autoref{ssec:detailed-phase-2-des} (appendix) we give a detailed description of our algorithms for~Phase~2 and formally prove the above theorem.


\section{Experimental Results}\label{sec:experiments}

\smallskip\noindent{\bf Implementation Setup.}
All our experiments are performed on an Intel Xeon E5-1620 3.6GHz machine with 64GB memory under the Debian GNU/Linux 6.0 operating system.
The program is implemented in Java and runs under the OpenJDK runtime environment in version~1.7.0\_25.
The time limit for one instance is set to one hour per~$k$-value and we tested for~$k ={}$2, 3, 4, 5, 7, 10, 15, 20, 30, 50, 100, 150, 200.
After reaching the time limit, the program is aborted and the upper and lower bounds computed so far by the dynamic program for Phase~1 are returned.
The source code is freely available.\footnote{\url{http://fpt.akt.tu-berlin.de/kAnon/}}

\smallskip\noindent{\bf Real-World Instances.}
We considered the five social networks from the co-author citation category in the 10$^\text{th}$ DIMACS challenge~\cite{Dim12}.
We compared the results of our upper bounds against an implementation of the clustering-heuristic provided by \citet{LSB12} and against the lower bounds given by the dynamic program.
Our algorithm could solve 26\% of the instances to optimality within one hour.
Interestingly, our exact approach worked best with the coPapersCiteseer graph from the 10$^\text{th}$ DIMACS challenge although this graph was the largest one considered (in terms of~$n+m$).
For all tested values of~$k$ except~$k=2$, we could optimally $k$\dash{}anonymize this graph and for~$k=2$ our upper bound heuristic is just two edges away from our lower bound.
The coAuthorsDBLP graph is a good representative for the results on the DIMACS-graphs, see \Cref{tbl:dimacSmallPart}:
\begin{table}[t]
	\renewcommand{\tabcolsep}{2pt}
	\caption{Experimental results on real-world instances. 
	We use the following abbreviations: CH for clustering-heuristic of \citet{LSB12}, OH for our upper bound heuristic, OPT for optimal value for the \kDegAnon problem, and DP for dynamic program for the \kDSRA problem.
	If the time entry for DP is empty, then we could not solve the \kDSRA instance within one-hour and the DP bounds display the lower and upper bounds computed so far.
	If OPT is empty, then either the \kDSRA solutions could not be realized or the \kDSRA instance could not be solved within one hour.
	} \label{tbl:dimacSmallPart}
	\begin{tabular}{cr|rrr|rr|rrr}
&  & \multicolumn{3}{c|}{solution size} & \multicolumn{2}{c|}{DP bounds} & \multicolumn{3}{c}{time (in seconds)} \\
graph & k & CH & OH & OPT & lower & upper & CH & OH & DP \\ \thickhline
\multirow{4}{2.5cm}{coAuthorsDBLP\\ $(n\approx2.9\cdot 10^5$,\\ $m\approx9.7\cdot 10^5$,\\$\Delta=336)$} & 2 & 97 & 62 & \emptyBox & 61 & 61 & 1.47 & 0.08 & 0.043 \\
  & 5 & 531 & 321 & 317 & 317 & 317 & 1.41 & 0.29 & 26.774 \\
  & 10 & 1,372 & 893 & \emptyBox & 869 & 869 & 1.03 & 0.48 & 1.58 \\
  & 100 & 21,267 & 15,050 & \emptyBox & 10,577 & 11,981 & 1.13 & 885.79 & \emptyBox \\ \hline
\multirow{4}{2.5cm}{coPapersCiteseer\\ $(n\approx4.3\cdot 10^5$,\\ $m\approx1.6\cdot 10^7$,\\$\Delta=1188)$} & 2 & 203 & 80 & \emptyBox & 78 & 78 & 9.9 & 0.1 & 0.394 \\
  & 5 & 998 & 327 & 327 & 327 & 327 & 10.32 & 0.19 & 0.166 \\
  & 10 & 2,533 & 960 & 960 & 960 & 960 & 8.83 & 0.74 & 0.718 \\
  & 100 & 51,456 & 22,030 & 22,007 & 22,007 & 22,007 & 5.97 & 263.95 & 264.553 \\ \hline
\multirow{4}{2.5cm}{coPapersDBLP\\ $(n\approx5.4\cdot 10^5$,\\ $m\approx1.5\cdot 10^7$,\\$\Delta=3299)$} & 2 & 1,890 & 1,747 & \emptyBox & 950 & 1,733 & 11.28 & 2.13 & \emptyBox \\
  & 5 & 9,085 & 8,219 & \emptyBox & 4,414 & 8,121 & 10.66 & 28.83 & \emptyBox \\
  & 10 & 19,631 & 17,571 & \emptyBox & 9,557 & 17,328 & 9.95 & 149.56 & \emptyBox \\
  & 100 & 258,230 & \emptyBox & \emptyBox & 128,143 & 233,508 & 22.16 & \emptyBox & \emptyBox \\\thickhline
	\end{tabular}
\end{table}
A few instances could be solved optimally and for the remaining ones our heuristic provides a fairly good upper bound.
One can also see that the running times of our algorithms increase (in general) exponentially in~$k$.
This behavior captures the fact that our dynamic program for Phase~1 iterates over all minimal solutions and for increasing~$k$ the number of these solutions increases dramatically.
Our heuristic also suffers from the following effect:
Whereas the maximum running time of the clustering-heuristic heuristic was one minute, our heuristic could solve 74\% of the instances within one minute and did not finish within the one-hour time limit for 12\% of the tested instances.
However, the solutions produced by our upper bound heuristic are always smaller than the solutions provided by the clustering-heuristic, on average the clustering-heuristic results are 72\% larger than the results of our heuristic.

\smallskip\noindent{\bf Random Instances.}
We generated random graphs according to the model by Barabási–Albert~\cite{BA99} using the implementation provided by the AGAPE project~\cite{BLL12} with the JUNG library\footnote{\url{http://jung.sourceforge.net/}}.
Starting with~$m_0=3$ and~$m_0 = 5$ vertices these networks evolve in~$t \in \{400, 800, 1200, \ldots, 34000\}$ steps.
In each step a vertex is added and made adjacent to~$m_0$ existing vertices where vertices with higher degree have a higher probability of being selected as neighbor of the new vertex.
In total, we created 170 random instances.

Our experiments reveal that the synthetic instances are particular hard.
For example, even for~$k=2$ and~$k=3$ we could only solve 14\% of the instances optimal although our dynamic program produces solutions for Phase~1 in 96\% of the instances.
For higher values of~$k$ the results are even worse (for example zero exactly solved instances for~$k=10$).
This indicates that the current lower bound provided by Phase~1 needs further improvements.
However, the upper bound provided by our heuristic are not far away: On average the upper bound is 3.6\% larger than the lower bound and the maximum is 15\%.
Further enhancing the advanced Erd\H{o}s-Gallai test seem to be the most promising step towards closing this gap between lower and upper bound.
Comparing our heuristic with the clustering-heuristic reveal similar results as for real-world instances.
Our heuristic always beats the clustering-heuristic in terms of solution size, see \Cref{fig:comparisonHeuristicsRandomGraphs} for~$k=2$ and~$k=3$.
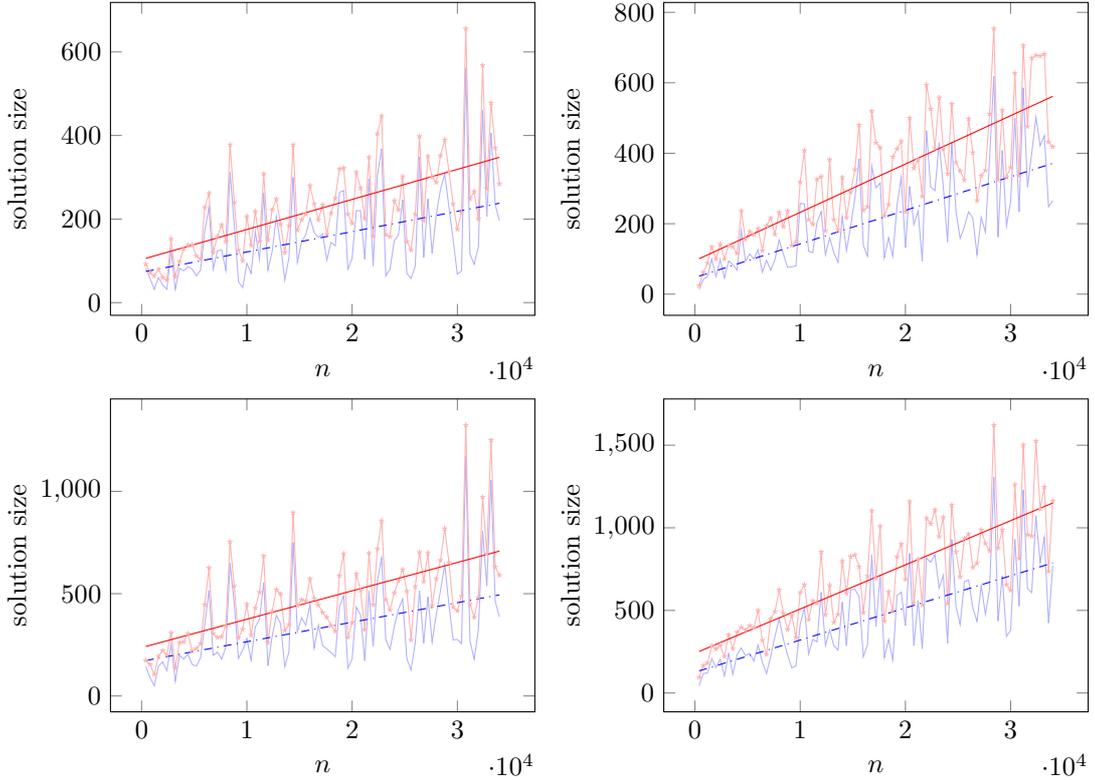
\begin{figure}[t]
	\centering
	\tikzset{every mark/.append style={scale=0.5}}
	\begin{subfigure}[t]{0.48\textwidth}
		\begin{tikzpicture}[scale=1]
			\begin{axis}[
						width=\textwidth,
						height=0.8\textwidth,
						xlabel={$n$},
						ylabel={solution size},
						legend cell align=left,
						legend pos=north west]

				\addplot[color=blue!30] table [x={t}, y={our_upper}] {synthetic_k_2_n_3.dat};
				\addplot[color=red!30,mark=star] table [x={t}, y={Clustering_Heuristic_size}] {synthetic_k_2_n_3.dat};
				\addplot[color=blue,dashdotted] table [x={t}, y={create col/linear regression={y={our_upper}}}] {synthetic_k_2_n_3.dat};
				\addplot[color=red] table [x={t}, y={create col/linear regression={y={Clustering_Heuristic_size}}}] {synthetic_k_2_n_3.dat};
			\end{axis}
		\end{tikzpicture}
	\end{subfigure}
	\begin{subfigure}[t]{0.48\textwidth}
		\begin{tikzpicture}[scale=1]
			\begin{axis}[
						width=\textwidth,
						height=0.8\textwidth,
						xlabel={$n$},
						ylabel={solution size},
						legend cell align=left,
						legend pos=north west]

				\addplot[color=blue!30] table [x={t}, y={our_upper}] {synthetic_k_2_n_5.dat};
				\addplot[color=red!30,mark=star] table [x={t}, y={Clustering_Heuristic_size}] {synthetic_k_2_n_5.dat};
				\addplot[color=blue,dashdotted] table [x={t}, y={create col/linear regression={y={our_upper}}}] {synthetic_k_2_n_5.dat};
				\addplot[color=red] table [x={t}, y={create col/linear regression={y={Clustering_Heuristic_size}}}] {synthetic_k_2_n_5.dat};
			\end{axis}
		\end{tikzpicture}
	\end{subfigure}

	\vspace{0.15cm}

	\begin{subfigure}[b]{0.48\textwidth}
		\begin{tikzpicture}[scale=1]
			\begin{axis}[
						width=\textwidth,
						height=0.8\textwidth,
						xlabel={$n$},
						ylabel={solution size},
						legend cell align=left,
						legend pos=north west]
				\addplot[color=blue!30] table [x={t}, y={our_upper}] {synthetic_k_3_n_3.dat};
				\addplot[color=red!30,mark=star] table [x={t}, y={Clustering_Heuristic_size}] {synthetic_k_3_n_3.dat};
				\addplot[color=blue,dashdotted] table [x={t}, y={create col/linear regression={y={our_upper}}}] {synthetic_k_3_n_3.dat};
				\addplot[color=red] table [x={t}, y={create col/linear regression={y={Clustering_Heuristic_size}}}] {synthetic_k_3_n_3.dat};
			\end{axis}
		\end{tikzpicture}
	\end{subfigure}
	\begin{subfigure}[b]{0.48\textwidth}
		\begin{tikzpicture}[scale=1]
			\begin{axis}[
						width=\textwidth,
						height=0.8\textwidth,
						xlabel={$n$},
						ylabel={solution size},
						legend cell align=left,
						legend pos=north west]
				\addplot[color=blue!30] table [x={t}, y={our_upper}] {synthetic_k_3_n_5.dat};
				\addplot[color=red!30,mark=star] table [x={t}, y={Clustering_Heuristic_size}] {synthetic_k_3_n_5.dat};
				\addplot[color=blue,dashdotted] table [x={t}, y={create col/linear regression={y={our_upper}}}] {synthetic_k_3_n_5.dat};
				\addplot[color=red] table [x={t}, y={create col/linear regression={y={Clustering_Heuristic_size}}}] {synthetic_k_3_n_5.dat};
			\end{axis}
		\end{tikzpicture}
	\end{subfigure}
	\caption{Comparison of our heuristic (always the light blue line without marks) with the clustering-heuristic (always the light red line with little star as marks) on random data with different parameters: Top row is for~$k=2$, bottom row for~$k=3$; the left column is for~$m_0 = 3$, and the right column for~$m_0=5$. The linear, solid dark red line and dash-dotted blue line are linear regressions of the corresponding data plot. One can see that our heuristic produces always smaller~solutions.} \label{fig:comparisonHeuristicsRandomGraphs}
\end{figure}
We remark that for larger values of~$k$ the running time of the heuristic increases dramatically:
For~$k=30$ our algorithm provides upper bounds for 96\% of the instances, whereas for~$k=150$ this value drops to 18\%.

\section{Conclusion}
We have demonstrated that our algorithm framework is suitable to solve \kDegAnon on real-world social networks.
The key ingredients for this is an improved dynamic programming for the task to $k$\dash{}anonymize degree sequences together with certain lower bound techniques, namely the advanced Erd\H{o}s-Gallai test.
We have also demonstrated that the local exchange heuristic due to \citet{LT08} is a powerful algorithm for realizing \kAnonymous sequences and provided some theoretical justification for this effect.

The most promising approach to speedup our algorithm and to overcome its limitations on the considered random data, is to improve the lower bounds provided by the advanced Erd\H{o}s-Gallai test.
Towards this, and also to improve the respective running times, one should try to answer the question whether one can find in polynomial-time a minimum \kAnonymization~$\D'$ of a given degree sequence~$\D$ such that $\D'-\D$ is realizable.

{ 
\bibliographystyle{abbrvnat}
\bibliography{bibliography}

\newcommand{\bibremark}[1]{\marginpar{\tiny\bf#1}}
\begin{thebibliography}{20}
\providecommand{\natexlab}[1]{#1}
\providecommand{\url}[1]{\texttt{#1}}
\expandafter\ifx\csname urlstyle\endcsname\relax
  \providecommand{\doi}[1]{doi: #1}\else
  \providecommand{\doi}{doi: \begingroup \urlstyle{rm}\Url}\fi

\bibitem[Barab{\'a}si and Albert(1999)]{BA99}
A.~Barab{\'a}si and R.~Albert.
\newblock Emergence of scaling in random networks.
\newblock \emph{Science}, 286\penalty0 (5439):\penalty0 509--512, 1999.

\bibitem[Berthom{\'e} et~al.(2012)Berthom{\'e}, Lalande, and Levorato]{BLL12}
P.~Berthom{\'e}, J.-F. Lalande, and V.~Levorato.
\newblock Implementation of exponential and parametrized algorithms in the
  {AGAPE} project.
\newblock \emph{CoRR}, abs/1201.5985, 2012.

\bibitem[Casas-Roma et~al.(2013)Casas-Roma, Herrera-Joancomart{\'\i}, and
  Torra]{RHT13}
J.~Casas-Roma, J.~Herrera-Joancomart{\'\i}, and V.~Torra.
\newblock An algorithm for $k$\dash{}degree anonymity on large networks.
\newblock In \emph{Proc.\ ASONAM'13}, pages 671--675. ACM~Press, 2013.

\bibitem[Chester et~al.(2012)Chester, Gaertner, Stege, and Venkatesh]{CGSV12}
S.~Chester, J.~Gaertner, U.~Stege, and S.~Venkatesh.
\newblock Anonymizing subsets of social networks with degree constrained
  subgraphs.
\newblock In \emph{Proc.\ ASONAM'12}, pages 418--422. IEEE Computer Society,
  2012.

\bibitem[Diestel(2010)]{Die10}
R.~Diestel.
\newblock \emph{Graph Theory}, volume 173 of \emph{Graduate Texts in
  Mathematics}.
\newblock Springer, 4th edition, 2010.

\bibitem[DIMACS'12(2012)]{Dim12}
DIMACS'12.
\newblock Graph partitioning and graph clustering. 10th {DIMACS} challenge,
  2012.
\newblock URL \url{http://www.cc.gatech.edu/dimacs10/}.
\newblock Accessed April 2012.

\bibitem[Easley and Kleinberg(2010)]{EK10}
D.~Easley and J.~Kleinberg.
\newblock \emph{Networks, Crowds, and Markets}.
\newblock Cambridge University Press, 2010.

\bibitem[Erd{\H{o}}s and Gallai(1960)]{EG60}
P.~Erd{\H{o}}s and T.~Gallai.
\newblock Graphs with prescribed degrees of vertices (in {H}ungarian).
\newblock \emph{Math. Lapok}, 11:\penalty0 264--274, 1960.

\bibitem[Fung et~al.(2010)Fung, Wang, Chen, and Yu]{FWCY10}
B.~C.~M. Fung, K.~Wang, R.~Chen, and P.~S. Yu.
\newblock Privacy-preserving data publishing: A survey of recent developments.
\newblock \emph{ACM Computing Surveys}, 42\penalty0 (4):\penalty0 14:1--14:53,
  2010.

\bibitem[Garey and Johnson(1979)]{GJ79}
M.~R. Garey and D.~S. Johnson.
\newblock \emph{Computers and Intractability: {A} Guide to the Theory of
  {NP}-Completeness}.
\newblock Freeman, 1979.

\bibitem[Hartung et~al.(2013)Hartung, Nichterlein, Niedermeier, and
  Such{\'y}]{HNNS13}
S.~Hartung, A.~Nichterlein, R.~Niedermeier, and O.~Such{\'y}.
\newblock A refined complexity analysis of degree anonymization in graphs.
\newblock In \emph{Proc.\ 40th ICALP}, volume 7966 of \emph{LNCS}, pages
  594--606. Springer, 2013.

\bibitem[Liu and Terzi(2008)]{LT08}
K.~Liu and E.~Terzi.
\newblock Towards identity anonymization on graphs.
\newblock In \emph{Proc.\ SIGMOD '08}, pages 93--106. ACM, 2008.

\bibitem[Lu et~al.(2012)Lu, Song, and Bressan]{LSB12}
X.~Lu, Y.~Song, and S.~Bressan.
\newblock Fast identity anonymization on graphs.
\newblock In \emph{Proc.\ DEXA'12, Part~I}, volume 7446 of \emph{LNCS}, pages
  281--295. Springer, 2012.

\bibitem[Samarati(2001)]{Sam01}
P.~Samarati.
\newblock Protecting respondents identities in microdata release.
\newblock \emph{IEEE Transactions on Knowledge and Data Engineering},
  13\penalty0 (6):\penalty0 1010--1027, 2001.

\bibitem[Samarati and Sweeney(1998)]{SS98}
P.~Samarati and L.~Sweeney.
\newblock Generalizing data to provide anonymity when disclosing information.
\newblock In \emph{Proc.\ PODS'98}, pages 188--188. ACM, 1998.

\bibitem[Sweeney(2002)]{Swe02}
L.~Sweeney.
\newblock $k$-anonymity: A model for protecting privacy.
\newblock \emph{International Journal of Uncertainty, Fuzziness and
  Knowledge-Based Systems}, 10\penalty0 (5):\penalty0 557--570, 2002.

\bibitem[Thompson and Yao(2009)]{TY09}
B.~Thompson and D.~Yao.
\newblock The union-split algorithm and cluster-based anonymization of social
  networks.
\newblock In \emph{Proc.\ 4th ASIACCS'09}, pages 218--227. ACM, 2009.

\bibitem[Tripathi and Vijay(2003)]{TV03}
A.~Tripathi and S.~Vijay.
\newblock A note on a theorem of {Erd{\"o}s {\&} Gallai}.
\newblock \emph{Discrete Math.}, 265\penalty0 (1-3):\penalty0 417--420, 2003.

\bibitem[Zhou and Pei(2011)]{ZP11}
B.~Zhou and J.~Pei.
\newblock The {\it k}-anonymity and {\it l}-diversity approaches for privacy
  preservation in social networks against neighborhood attacks.
\newblock \emph{Knowledge and Information Systems}, 28\penalty0 (1):\penalty0
  47--77, 2011.

\bibitem[Zhou et~al.(2008)Zhou, Pei, and Luk]{ZPL08}
B.~Zhou, J.~Pei, and W.~Luk.
\newblock A brief survey on anonymization techniques for privacy preserving
  publishing of social network data.
\newblock \emph{ACM SIGKDD Explorations Newsletter}, 10\penalty0 (2):\penalty0
  12--22, 2008.

\end{thebibliography}
}
\appendix
\newpage

\section{Detailed Description of the Algorithms for~Phase~2: Realizing the $k$-anonymous degree sequence}\label{ssec:detailed-phase-2-des}

\smallskip\noindent{\bf Phase~2.1: Finding a Degree-Vertex Mapping.}
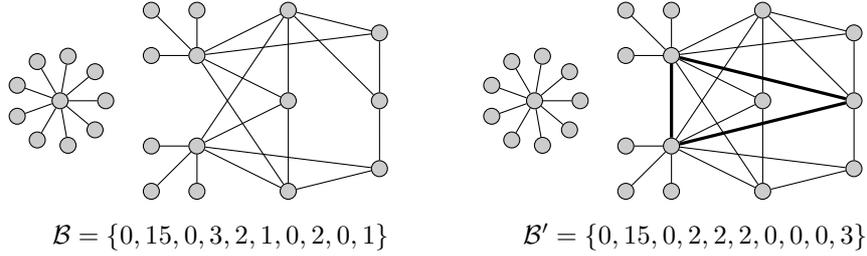
\begin{figure}[t]
	\centering
		\tikzstyle{vertex}=[circle,draw=black,minimum size=6pt,inner sep=0.8pt,fill=black!20]
		\def\n{9}
		\begin{tikzpicture}[scale=1.2]
			\foreach \i in {1,...,\n} {
				\node[vertex] (v-\i) at ( { 1 + 0.5 * cos(360 * \i / \n ) } , {0.5 * sin(360 * \i / \n )} ) {};
			}
			\foreach[count=\i] \x/\y in {2.5/-1,2/-1,2/-0.5, 2.5/1,2/1,2/0.5}
				\node[vertex](v2-\i) at (\x,\y) {};
			\foreach[count=\i] \x/\y in {1/0,2.5/-0.5,2.5/0.5,3.5/-1,3.5/0,3.5/1,4.5/-0.75,4.5/0,4.5/0.75}
				\node[vertex](v\i) at (\x,\y) {};
			\node at (2.75,-1.5) {$\B =\{ 0, 15, 0, 3, 2, 1, 0, 2, 0, 1\}$};
			\foreach \from in {1,...,\n}
				\draw (v-\from) -- (v1);
			\foreach \from in {1,...,3}
				\draw (v2-\from) -- (v2);
			\foreach \from in {4,...,6}
				\draw (v2-\from) -- (v3);
			\draw (v2) -- (v7);
			\draw (v3) -- (v9);

			\foreach \from/\to in {4/5,4/7,5/6,6/8,6/9,2/4,2/5,2/6,3/4,3/5,3/6,7/8,8/9}
				\draw (v\from) -- (v\to);

			\begin{scope}[xshift=5.2cm]
				\foreach \i in {1,...,\n} {
					\node[vertex] (v-\i) at ( { 1 + 0.5 * cos(360 * \i / \n ) } , {0.5 * sin(360 * \i / \n )} ) {};
				}
				\foreach[count=\i] \x/\y in {2.5/-1,2/-1,2/-0.5, 2.5/1,2/1,2/0.5}
					\node[vertex](v2-\i) at (\x,\y) {};
				\foreach[count=\i] \x/\y in {1/0,2.5/-0.5,2.5/0.5,3.5/-1,3.5/0,3.5/1,4.5/-0.75,4.5/0,4.5/0.75 }
					\node[vertex](v\i) at (\x,\y) {};
				\node at (2.75,-1.5) {$\B' =\{ 0, 15, 0, 2, 2, 2, 0, 0, 0, 3\}$};
				\foreach \from in {1,...,\n}
					\draw (v-\from) -- (v1);
				\foreach \from in {1,...,3}
					\draw (v2-\from) -- (v2);
				\foreach \from in {4,...,6}
					\draw (v2-\from) -- (v3);
				\draw (v2) -- (v7);
				\draw (v3) -- (v9);

				\foreach \from/\to in {4/5,4/7,5/6,6/8,6/9,2/4,2/5,2/6,3/4,3/5,3/6,7/8,8/9}
					\draw (v\from) -- (v\to);
				\draw[very thick] (v2) -- (v8);
				\draw[very thick] (v3) -- (v8);
				\draw[very thick] (v2) -- (v3);
			\end{scope}
		\end{tikzpicture}
	\caption{Our smallest example where ``jumps'' are necessary to obtain a minimum-size solution. The only minimum solution to 2-anonymize the left graph is to add the bold edges in the right graph. Observe that, although there are two degree-four vertices in the graph, the solution lifts a degree-three vertex to degree five, that is, there is a ``jump'' of two.
	}
	\label{fig:jumps-are-necessary}
\end{figure}
Given a graph~$G$ with its block sequence~$\B$ and a \kAnonymous block sequence~$\B'\ogreaterthan\B$, the two main difficulties that arise when trying to find a best possible (realizable) degree-vertex mapping for~$\B'$ are as follows:
To illustrate to first one, consider to  2\dash{}anonymize a graph consisting of two connected components $\{a,b,c\}$ and $\{d,e\}$ where each component is just a path.
Hence, the block sequence is $\B=\{0,4,1\}$ and a minimum solution would be to insert an edge between two degree-one vertices, resulting in the block sequence $\B'=\{0,2,3\}$.
Given~$\B'$, a degree-vertex mapping has to choose two degree-one vertices where all but the choice~$\{d,e\}$ leads to a realization.
Hence, the basic problem is that a degree-vertex mapping has to choose $x$~many vertices from~block~$i$ which is of size more than~$x$ and thus the assignment is non-unique.
In our experiments we observed that this difficulty can be
solved satisfactorily by randomly selecting the vertices from the blocks.
However, the second difficulty is a more severe problem, also on practical instances.

Assume that $\B=\{3,2,1\}$ is the block sequence of our input graph (three degree-zero vertices and a path of length~two) and the result of Phase~1 is the \kAnonymized[2] block sequence~$\B'=\{2,2,2\}$.
Now, the difficulty arises that there are actually two ``interpretations'' of~$\B'$: The first (natural) one would be
to increase a degree zero up to~one and a degree two up to~three.
However, the second would be that one degree zero is increased by two up to three.
We call this a \emph{jump} since a degree is increased ``over'' a non-empty block, while the natural interpretation (making most sense in the majority of the cases) is that a degree is increased to the next non-empty block~$B$ and from there, first the vertices originally in~$B$ are increased further.
While in the example above the second ``jump''-interpretation cannot be realized (only one vertex has non-zero demand),
\autoref{fig:jumps-are-necessary} illustrates an example where the only realizable degree-vertex mapping has such a jump.

In our experiments, against our a-priori intuition, we observed that the \kAnonymized sequences~$\B'$ (computed in Phase~1) have typically less than ten ``jump blocks'' (a jump over these blocks is possible) and for each of these blocks up to five degrees can jump ``over'' it.
Since the number of jump blocks is reasonably small and as we try to realize many degree-vertex mappings for each~$\B'$ (100~in the results presented in \autoref{sec:experiments}), it turned out that, for increasing~$\alpha$ from~zero up to the number of jump blocks, iterating through all possibilities to choose~$\alpha$ jump blocks is a good choice.
Having fixed the jumps, it follows how many degrees from~$i$ are increased to~$j$ and we randomly (25~trials for each jump configuration) select the appropriate number of vertices from block~$B_i$ in~$G$.
In total, for each given~$\B'$ from Phase~1 we try to realize $25\cdot 100=2500$ degree-vertex mappings.
These parameters ($25$ and~$100$) have been chosen according to the results in preliminary experiments and seem to be a good compromise between expected success rate and the needed time.

\smallskip\noindent{\bf Phase~2.2: Realizing a Degree-Vertex Mapping.}
In the last part of finding a realization of a \kAnonymized sequence~$\B'$ in a graph~$G=(V,E)$, one is given a degree-vertex mapping which provides a non-negative integer demand for each vertex and the task is to decide whether it is realizable, that is, is there an edge insertion set~$S$ such that in $G+S$ the amount of incident new edges for each vertex  is equal to its demand.
Formally, let $\dem\colon V\rightarrow\N$ be the function providing the demand of each vertex.
Whether~\dem is realizable can be decided in polynomial-time by solving an $f$\dash{}factor instance and it has been shown that, for the maximum degree~$\Delta$ of~$G$, $\dem$~is always realizable if $\sum_{v\in V} \dem(v)\ge (\Delta^2+4\Delta+3)^2$~\cite[Lemma~4]{HNNS13}.
We have implemented a so-called local exchange heuristic by~\citet{LT08} which turned out to perform surprisingly well.
Indeed, we here present also some theoretical justification for this, formally proving that basically the same lower bound (as for $f$\dash{}factor) on $\sum_{v\in V}\dem(v)$ is enough to guarantee that the local exchange heuristic always realizes~$\dem$.

In principle, the local exchange heuristic adds edges between vertices as long as possible to satisfy their demand and if it gets stuck at some point, then it tries to continue by exchanging an already inserted edge.
Formally, it works as follows:
Let~$S$ be the set of new edges which is initialized by~$\emptyset$.
As long as there are two vertices~$u$ and~$v$ with non-zero demand, check whether the edge $\{u,v\}$ is insertable, meaning that neither $\{u,v\}\in E$ nor $\{u,v\}\in S$. If it is insertable, then add $\{u,v\}$ to~$S$ and decrease the demand of~$u$ and~$v$ by one.
If this procedure ends with all vertices having demand zero, then $S$~is an insertion set realizing~$\dem$.
Otherwise, we are left with a set~$V^\dem$ of vertices with non-zero demand.
If there are two vertices~$v_1,v_2\in V^\dem$, then for each edge $\{u,w\}\in S$ check whether the two edges $\{v_1,u\}$ and~$\{v_2,w\}$ or $\{v_1,w\}$ and~$\{v_2,v\}$ are insertable. If so, then delete~$\{u,w\}$ from~$S$, insert the two edges that are insertable, and decrease the demand of $v_1$ and~$v_2$ by one.
In the special case of $V^\dem$ containing only one vertex~$v$, then it holds that the remaining demand of~$v$ is at least two, because $\sum_{v\in V}\dem(v)$ can be assumed to be even (otherwise it is not realizable).
In this case perform the following for each edge $\{u,w\}\in S$: Check whether $\{v,u\}$ and $\{v,w\}$ are insertable and if so, then insert them to~$S$, delete~$\{u,w\}$ from~$S$, and decrease the demand of~$v$ by two.

We have implemented the local exchange heuristic so that it first randomly tries to add edges and then, if stuck at some point, performs the above described exchange operations (if possible).
We conclude with proofing a certain lower bound on $\sum_{v\in V}\dem(v)$ which guarantees the success of the local exchange heuristic.
As a first step for this, we prove that any demand function can be assumed to require to increase the vertex degrees at most up to~$2\Delta^2$.

\begin{lemma}\label{lem:degreeBound}
 Any minimum-size \kInsertSet for an instance of \kDegAnon yields a graph with maximum degree at most~$\degreeBound$.
\end{lemma}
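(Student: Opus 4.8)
The plan is to argue by contradiction using the minimality of $S$. Suppose $S$ is a minimum-size \kInsertSet for $(G,k)$, write $H=G+S$, and assume some vertex has degree exceeding $\degreeBound$ in~$H$. Since $\deg_H(v)=\deg_G(v)+\dem(v)\le\Delta+\dem(v)$, it suffices to bound the demands, so I would focus on the topmost block $X:=B^H_D$, where $D:=\Delta_H>\degreeBound$. Because $D>\Delta$, every vertex of $X$ was increased and carries demand at least $D-\Delta>2\Delta^2-\Delta$; moreover $X$ is nonempty and the graph is \kAnonymous, so $|X|\ge k$. The goal is to convert this abundance of inserted edges into a strictly cheaper \kInsertSet, contradicting the minimality of~$S$.

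Next I would quantify the available slack. Since all original degrees lie in $\{0,\dots,\Delta\}$, there are at most $\Delta+1$ nonempty blocks in~$G$, and a handshake count on $X$ shows that the inserted edges incident to~$X$ are numerous: writing $a$ for the edges of $S$ inside $X$ and $b$ for those leaving $X$, one has $2a+b=\sum_{v\in X}\dem(v)\ge|X|(D-\Delta)>|X|(2\Delta^2-\Delta)$, using $\sum_{v\in X}\deg_G(v)\le|X|\Delta$. Hence either many inserted edges lie inside $X$ or $X$ sends many inserted edges to distinct touched (positive-demand) vertices outside it. In both cases the number of partners available to the vertices of $X$ vastly exceeds every $\Delta$-sized obstruction, namely the at most~$\Delta$ forbidden neighbours any vertex already has in~$G$ and the at most~$\Delta+1$ block boundaries.

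The core step is a cost-reducing rewiring that preserves $k$-anonymity and the supergraph property. For a vertex $v\in X$ with two inserted edges $\{v,p\},\{v,q\}\in S$ whose endpoints $p,q$ are non-adjacent in~$H$, replacing these two edges by the single edge $\{p,q\}$ deletes one edge of $S$ and lowers $\deg_H(v)$ by two while leaving all other degrees unchanged. I would apply such $2$-for-$1$ swaps to the whole block $X$ at once, so that its $|X|\ge k$ vertices descend together and remain grouped into blocks of size $0$ or at least~$k$; since only edges of $S$ are ever deleted, the outcome stays a supergraph of~$G$. The counting bound guarantees that, as long as $D>\degreeBound$, enough admissible swaps exist, because the configurations that could block a swap are all $O(\Delta)$ in number whereas each top vertex has more than $2\Delta^2-\Delta$ inserted edges. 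This yields a \kInsertSet with strictly fewer edges, contradicting minimality, and iterating the argument shows that no block can sit above degree~$\degreeBound$.

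The hard part will be the bookkeeping that keeps the sequence \kAnonymous throughout the rewiring: a single swap changes one degree and would split the block containing~$v$, so the swaps must be organised block-wise and their remaining endpoints routed into currently empty or already-large blocks. I expect the two delicate points to be (i)~choosing the descent target for $X$ so that the merged or newly created blocks never land in the forbidden size range $1,\dots,k-1$, and (ii)~pinning down the exact constant: each realizability obstruction a vertex can face stems from its at most~$\Delta$ neighbours in~$G$ interacting with the at most~$\Delta+1$ block boundaries, and tracking this interaction carefully is what caps the unavoidable degree increase at~$\degreeBound$.
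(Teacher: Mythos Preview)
Your local $2$-for-$1$ swap is the right primitive, but the step ``the configurations that could block a swap are all $O(\Delta)$ in number'' is where the argument breaks. For a fixed $v\in X$ you need two inserted neighbours $p,q$ that are non-adjacent in $H$. The obstruction to this is governed by $\deg_H(p)$, not by $\Delta$: if $p$ itself has degree close to $D$ in $H$, it may be adjacent to every other inserted neighbour of $v$. Concretely, nothing you have assumed rules out that the inserted edges turn $X$ into a clique in $H$ (take $G$ empty on $X$ and $S\supseteq\binom{X}{2}$); then every inserted neighbour of every $v\in X$ lies in $X$ and is adjacent to every other, so \emph{no} swap exists for any vertex of $X$. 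Your counting bound $2a+b>|X|(2\Delta^2-\Delta)$ only says there are many inserted edges, not that their endpoints are sparsely connected in $H$; the $O(\Delta)$ claim implicitly assumes the partners have degree $O(\Delta)$, which is exactly the conclusion you are trying to establish.

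The paper's proof confronts precisely this dichotomy. It first passes to a maximal subgraph $\tilde G\subseteq G+S$ with $\Delta_{\tilde G}=\Delta$, lets $X$ be the degree-$\Delta$ vertices touched by the leftover edges $\tilde S$, and splits on $|X|$. When $|X|\le(\Delta+3)\Delta$ it runs essentially your swap, but the point is that every edge of $\tilde S$ meets $X$, so vertices \emph{outside} $X$ pick up at most $|X|$ extra edges and hence have degree below $2\Delta^2$; this is what guarantees non-adjacent partners and makes the swap go through. When $|X|>(\Delta+3)\Delta$ the swap is abandoned entirely: instead the paper rebuilds a cheaper solution from scratch, using Dirac's theorem to find perfect co-matchings and co-cycle covers on the large set $X$ and a fairly intricate parity case analysis to keep the result $k$-anonymous. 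So your proposal recovers only the first case; the second case needs a genuinely different, global construction rather than local rewiring.
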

{
\begin{proof}\label{proof:degreebound}
	Before proving \autoref{lem:degreeBound}, we introduce the terms ``co-matching'' and ``co-cycle cover'' and prove an observation concerning their existence.
	A graph~$G = (V,E)$ contains a \emph{co-matching} of size~$\ell$ if~$\overline{G}$ contains a matching of size~$\ell$, that is, a subset of~$\ell$~non-overlapping edges of~$\overline{G}$.
	A \emph{perfect} co-matching of~$G$ is a co-matching of size~$\nicefrac{|V|}{2}$.
	Analogously, $G$ contains a \emph{co-cycle cover} if~$\overline{G}$ contains a cycle cover, that is, a subgraph of~$\overline{G}$ with~$|V|$ vertices such that each vertex has degree two.
	We prove the following observation that shows sufficient conditions for the existence of co-matchings and co-cycle covers.

	\begin{obs}\label{obs:perf-co-matching-cycle-cover}
		Let~$G = (V, E)$ be a graph and let $V'\subseteq V$ be a vertex subset with~$|V'| \ge 2\Delta + 1$. Then, $G[V']$ contains a co-cycle cover and if~$|V'|$ is even, then $G[V']$ contains a perfect co-matching.
	\end{obs}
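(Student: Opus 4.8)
The plan is to pass to the complement and exploit that, inside~$V'$, the induced subgraph $G[V']$ is sparse, hence its complement is dense. Write $n' := |V'|$ and let $H := \overline{G[V']}$ be the complement of~$G[V']$ taken on the vertex set~$V'$. A perfect co-matching, resp.\ a co-cycle cover, of~$G[V']$ is by definition a perfect matching, resp.\ a spanning $2$-regular subgraph (a $2$-factor), of~$H$. The single estimate driving everything is that every $v\in V'$ satisfies $\deg_H(v) = (n'-1) - \deg_{G[V']}(v) \ge (n'-1) - \Delta$, since $\deg_{G[V']}(v)\le \deg_G(v)\le \Delta$. Thus $\delta(H)\ge n'-1-\Delta$, and because $n'\ge 2\Delta+1$ forces $\Delta\le (n'-1)/2$, we get the uniform bound $\delta(H)\ge (n'-1)/2$.

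First I would dispose of the perfect co-matching, which only concerns even~$n'$. For even~$n'$ the hypothesis $n'\ge 2\Delta+1$ actually gives $n'\ge 2\Delta+2$, whence $\delta(H)\ge n'-1-\Delta\ge n'/2$. By Dirac's theorem $H$ is then Hamiltonian, and a Hamiltonian cycle on an even number of vertices splits into two perfect matchings; taking either one yields a perfect matching of~$H$, i.e.\ a perfect co-matching of~$G[V']$. The same Hamiltonian cycle is a spanning union of cycles, so it is simultaneously a cycle cover, which already settles the co-cycle cover whenever $n'$ is even.

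For the co-cycle cover at the remaining, odd, values of~$n'$ I would again reach for Dirac. If $n'\ge 2\Delta+3$, then $\delta(H)\ge n'-1-\Delta\ge \Delta+2 > n'/2$, so $H$ is Hamiltonian and the Hamiltonian cycle is the desired cycle cover. This isolates exactly the boundary $n' = 2\Delta+1$, where the estimate only gives $\delta(H)\ge \Delta = (n'-1)/2$, which is one short of Dirac's threshold. Here a bare Hamiltonicity argument cannot work: $\delta(H)\ge (n'-1)/2$ is precisely the threshold that forces a Hamiltonian \emph{path} but not a Hamiltonian cycle, and neither Dirac's nor Ore's condition applies off the shelf.

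The main obstacle is therefore the odd boundary case $n' = 2\Delta+1$, and I expect it to be genuinely delicate rather than routine. Indeed, the extremal configuration in which $G[V']$ is the disjoint union $K_\Delta \cup K_{\Delta+1}$ (so that $H = K_{\Delta,\Delta+1}$) has no $2$-factor at all, since a bipartite graph with parts of unequal size admits no spanning $2$-regular subgraph; note also that this $G[V']$ has independence number only~$2$, so one cannot even peel off a triangle of~$H$ and recurse. Consequently the clean and honest route is to establish the co-cycle cover under $|V'|\ge 2\Delta+2$, where the complement's minimum degree reaches~$n'/2$ and Dirac applies uniformly (the even case already needs $2\Delta+2$ anyway); the literal odd boundary $2\Delta+1$ is obstructed and would require either a strengthened hypothesis or an additional structural assumption on~$G[V']$ that rules out the bipartite-complement configuration above.
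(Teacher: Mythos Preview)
Your approach is identical to the paper's: pass to the complement, bound the minimum degree from below, invoke Dirac's theorem, and read off both the cycle cover and (for even $|V'|$) the perfect matching from a Hamiltonian cycle. The paper's entire argument is two lines: $\delta(\overline{G[V']})\ge |V'|-\Delta\ge |V'|/2$, apply Dirac, done.

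Where you diverge is in rigor, and here you are right and the paper is sloppy. The paper writes $\delta(\overline{G[V']})\ge |V'|-\Delta$, but the correct bound is $|V'|-1-\Delta$ (one must subtract the vertex itself). With the correct bound, Dirac's hypothesis $\delta\ge |V'|/2$ requires $|V'|\ge 2\Delta+2$, not $2\Delta+1$. Your counterexample $G[V']=K_\Delta\cup K_{\Delta+1}$, whose complement $K_{\Delta,\Delta+1}$ has no $2$-factor, shows that the observation as stated is genuinely false at the odd boundary $|V'|=2\Delta+1$; this is not a gap in your argument but in the paper's statement. Fortunately, every application of the observation in the paper (Cases~2.1 and~2.2.2.1 in the proof of the degree bound) invokes it with $|X'|>3\Delta$ or with even $|Y|>2\Delta$, both of which give $|V'|\ge 2\Delta+2$, so the downstream results survive. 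Your suggested fix---state the observation for $|V'|\ge 2\Delta+2$---is exactly the right repair.
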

	\begin{proof}
		Since~$|V'| \ge 2\Delta + 1$, it follows that in~$\overline{G[V']}$ every vertex has degree at least~$|V'|-\Delta \ge \nicefrac{|V'|}{2}$.
		Hence, using Diracs Theorem~\cite{Die10}, it follows that~$\overline{G[V']}$ contains a Hamiltonian cycle~$C$. Thus it contains a co-cycle cover.
		Additionally, if~$|V'|$ is even, then it follows that the number of vertices in~$C$ is even and, hence, taking every second edge of~$C$ results in a perfect matching.
	\end{proof}

	We now prove \autoref{lem:degreeBound}.
   We consider an instance $(G=(V,E),k)$ of \kDegAnon. Let~$S$~be a minimum-size \kInsertSet for~$G$.
	Suppose the maximum degree in~$G+S$ is greater than~$\degreeBound$.
	Let~$\TG$ be the graph that is obtained from~$G$ by iterating over all edges in~$S$ (in an arbitrary order) and adding an edge if it does not increase the maximum degree~$\Delta$. 
	Let~$\TS\subseteq S$ be the edges not contained in~$\TG$. By definition~$\TG+\TS$ is a $k$-anonymous graph and adding any edge from~$\TS$ to~$\TG$ causes a maximum degree of~$\Delta+1$.
	Hence, denoting by~$X\subseteq V(\TS)$ the vertices of degree~$\Delta$, each edge in~$\TS$ has at least one endpoint in~$X$.  Let~$Z=V(\TS)\setminus X$.
	Clearly, the vertices in~$\TG+\TS$ that have degree greater than~$\Delta$ are a subset of $V(\TS)=X\cup Z$. Next, we show how to construct an edge set of size less than~$|\TS|$ whose addition to~$\TG$ results in a $k$-anonymous graph.

     \newcommand{\xSplit}{(\Delta+3)\Delta}
	\textbf{Case~1:} $|X|\le \xSplit$. \\
	Since, each edge in~$\TS$ contains at least one endpoint from~$X$, every vertex in~$Z$ is incident to at most~$\xSplit$ edges in~$\TS$.
	Hence, the degree of each vertex from~$Z$ in~$\TG+\TS$ is less than $\xSplit+\Delta \le \degreeBound$.
	Thus, only the vertices from~$X$ can have degree more than $\degreeBound$ in~$\TG+\TS$ and each of them is adjacent to at least~$\degreeBound-\xSplit > \Delta^2/2 + 1$ vertices from~$Z$.
	Now for each vertex~$u$ of maximum degree in~$\TG+\TS$ find two non-adjacent neighbors of it in~$Z$, delete the edges between~$u$ and the neighbors and insert an edge between the neighbors.
	(Observe that the neighbors always exist, since $\TG$ has maximum degree at most~$\Delta-1$ and each such vertex~$u$ has at least $\Delta^2/2+1 > \Delta-1$ neighbors from~$Z$.)
	Hence, we get a smaller set of edge additions that also transforms~$\TG$ into a $k$-anonymous graph, implying a contradiction.

  \textbf{Case 2: $|X|>\xSplit$}\\
    We give an algorithm that transforms~$\TG$ by inserting at most~$|\TS|$ edges into a $k$-anonymous graph.

    \begin{enumerate}
     \item Initialize~$S'$ by a copy of~$\TS$ and~$G'$ by a copy of~$\TG$.
     \item \label{op:clique}While there are any two vertices $v,u\in V(S')$ that are non-adjacent and both have degree less than~$\Delta$ in~$G'$, add the edge~$\{u,v\}$ to~$G'$, delete one edge in~$S'$ that is incident to~$v$ and one that is incident to~$u$.
     \item \label{op:comatching} Let~$Z'=\{u\in Z\cap V(S')\mid \deg_{G'}(u)<\Delta\}$. Find a subset of vertices~$Y\subseteq X$ such that there is co-matching in~$G'$ from~$Z'$ to~$Y$ such that each vertex in~$Y$ is contained in exactly one co-matching edge and each vertex $z\in Z'$ is contained in exactly $\min\{\Delta,\deg_{\TG+\TS}(z)\}-\deg_{G'}(z)$ co-matching edges. Add the edges of this co-matching.
    \end{enumerate}
    To complete the algorithm we distinguish between several cases.
	Before that, observe that after Step~\ref{op:clique} the maximum degree of~$G'$ is still~$\Delta$.
	Additionally, since each vertex in~$Y\subseteq X$ is adjacent to only one co-matching edge and each vertex~$z \in Z'$ gets only $\min\{\Delta,\deg_{\TG+\TS}(z)\}-\deg_{G'}(z)$ additional edges, after Step~\ref{op:comatching} the maximum degree of~$G'$ is~$\Delta+1$.
	As Step~\ref{op:clique} is exhaustively applied, $Z'$ induces a clique and thus $|Z'|\le \Delta$. This implies that $|Y|<\Delta^2$.
	The existence of the set~$Y$ follows from the fact that after Step~\ref{op:clique} in~$G'$ each vertex in~$Z'$ is adjacent to at most~$\Delta-1$ vertices in~$X$ and $|X|>\xSplit$.
	Thus for each vertex~$z$ one can pick any $\min\{\Delta,\deg_{\TG+\TS}(z)\}-\deg_{G'}(z)$ non-adjacent vertices for~$X$ that are disjoint from those chosen for the others.

	Additionally, since all vertices from~$X$ have degree larger than~$\Delta$ in~$\TG+\TS$, the degree of each vertex in~$G'$ is at most as its degree in~$\TG+\TS$.
	Denote by~$X'$ all vertices in~$\TG+\TS$ that have degree greater than~$\Delta$ but are not contained in~$Y$.
	All vertices in~$X'$ have degree~$\Delta$ in~$G'$ (see Step~\ref{op:comatching}).
	Clearly, $X'\cup Y$ are exactly the vertices in~$\TG+\TS$ that have degree greater than~$\Delta$ and thus $|X'\cup Y|\ge k$.
	Furthermore, we do not change the degree of any vertex~$V \setminus (X' \cup Y)$.
	Thus, in the following cases it remains to argue that none of the vertices in~$X' \cup Y$ damage the \kAnonymous constraint and that the new solution adds less edges than the old one.

    \textbf{Case~2.1: $|X'|$ is even}\\
    As $(X\setminus Y)\subseteq X'$ and $|Y|\le \Delta^2$ it follows that $|X'|>3\Delta$. Hence, by \autoref{obs:perf-co-matching-cycle-cover} there is a perfect co-matching on the vertices in~$X'$ and inserting the corresponding edges results in a $k$-anonymous graph. The number of corresponding edge additions is less than those in~$\TS$, because $\TS$ increases the degree of all vertices in~$X'\cup Y$ also at least to~$\Delta + 1$ and for some of them even above~$\degreeBound$.

    \textbf{Case~2.2: $|X'|$ is odd}\\
    \textbf{Case~2.2.1 $k<|X'\cup Y|/2$}\\
    We increase via a co-matching  on $2\lceil \frac{k-|Y|}{2}\rceil$ vertices from~$X'$ the degree of enough vertices to~$\Delta+1$ such that together with~$Y$ there are at least~$k$ degree-$(\Delta+1)$ vertices. Observe that, $|X'|-2\lceil \frac{k-|Y|}{2}\rceil>2k-|Y|-(k-|Y|)=k$ and, thus, there are at least~$k$ vertices left with degree~$\Delta$.

    \textbf{Case~2.2.2 $k\ge |X'\cup Y|/2$} \\
    \textbf{Case~2.2.2.1 $|Y|$ is even}
    In this case we add the edges of a co-matching on~$Y$ and the edges of a co-cycle in~$X'$. This results in a $k$-anonymous graph with maximum degree~$\Delta+2$. Note that the number of edge additions in this solution to get degree $\Delta +2$ for the vertices in~$X'\cup Y$ is less than $(\Delta+2)\cdot |X'\cup Y|$ whereas for $\TG+\TS$ the set $\TS$ contains at least
    \begin{align*}
\frac{k \cdot (\degreeBound-\Delta)}{2}\ge \frac{|X'\cup Y|\cdot (\degreeBound-\Delta)}{4}
    \end{align*}
    edges to increase for at least~$k$ vertices the degree from at most $\Delta$ to at least~$\degreeBound$.

    By \autoref{obs:perf-co-matching-cycle-cover},  there exists a co-cycle cover on~$X'$ since~$|X'|> 3\Delta$.
	It remains to argue that there is co-matching on~$Y$.
	This follows from \autoref{obs:perf-co-matching-cycle-cover} in case of~$|Y|>2\Delta$.
	However, in case of $|Y|\le 2\Delta$, because of $|X|>(\Delta+3)\Delta$ the choice of~$Y$ in Step~\ref{op:comatching} can be easily adjusted to guarantee the existence of such a matching.

    \textbf{Case~2.2.2.2 $|Y|$ is odd}\\
    Observe that from $k\ge |X'\cup Y|/2$ it follows that $\TG+\TS$ can contain at most two vertex-degrees that are larger than~$\Delta$. We first argue that there cannot be just one vertex-degree greater than~$\Delta$. Recall that $X'\cup Y$ are exactly the vertices in~$\TG+\TS$ the have degree larger than~$\Delta$, say they have degree~$a$. We show that the difference in the sum of the vertex degrees from~$\TG$ compared to~$\TG+\TS$ is an odd number, a contradiction.
    First, note that the number of edges that are deleted from $\TS$ in Step~\ref{op:clique} is an even number, we ignore them in the following.
    Second, $\TS$ also contains edges that increase (as in Step~\ref{op:comatching}) the degree of each vertex $z\in Z'$ from $\deg_{G'}(z)$ to $\min\{\Delta,\deg_{\TG+\TS}(z)\}$ and they contribute $2|Y|$ to the sum of the degrees. Additionally, it contributes $|X'\cup Y|\cdot (a-\Delta)-|Y|$ to increase the degree of the vertices from $\Delta$ to~$a$ (note that minus~$Y$ is necessary because these degrees are already counted for the vertices in~$Z'$). Hence, the difference in the sum of the degrees from~$\TG$ to~$\TG+\TS$ is $|Y|+(|X'\cup Y|\cdot (a-\Delta))$ which is by our assumptions on $|X'|$ and~$|Y|$ an odd number, implying a contradiction.

    It remains to consider the case where $\TG+\TS$ contains two vertex-degrees that are larger than~$\Delta$. Hence, $k=|X'\cup Y|/2$. Again we consider the difference in the sum of the vertex-degree from~$\TG$ to~$\TG+\TS$. The vertex set $X'\cup Y$ is partitioned into a set $P_1$ the has degree~$a_1$ in~$\TG+\TS$ and  $P_2$ that has degree~$a_2$. Then the difference on the sum of the degrees is
    \begin{align*}
     2|Y|+&|P_1|(a_1\cdot \Delta)+|P_2|(a_2-\Delta)-|Y|\\
     |Y|+&\underbrace{P_1}_{\text{odd}}\underbrace{(a_1- \Delta)}_{\text{odd}}+\underbrace{|P_2|}_{\text{odd}}\underbrace{(a_2-\Delta)}_{\text{even}}
    \end{align*}
    Since $|Y|$ is odd exactly one of $|P_1|(a_1\cdot \Delta)$ or $|P_2|(a_2-\Delta)$ has to be an odd number. This implies together with the assumption that $|X'\cup Y|=|P_1\cup P_2|$ is an even number that both $|P_1|$ and~$|P_2|$ are odd numbers. Hence, since $k=|X'\cup Y|/2$ this implies that~$k$ is an odd number. However, in this case we can make~$G'$ $k$-anonymous by just adding a co-matching on $k-|Y|$ vertices in~$X'$.\qed
\end{proof}}
\medskip
\emph{Remark.}
We strongly conjecture that the bound in  \autoref{lem:degreeBound} is \emph{not} tight.
The worst example we found is a graph consisting of two disjoint cliques of size~$\Delta$ and~$\Delta+1$, respectively, and setting~$k=n$.
The only \kInsertSet for this instance makes the whole graph a clique and, thus, doubles the degree.
We conjecture that the bound can be improved to~$2\Delta$.

By \autoref{lem:degreeBound} we may assume that the the maximum degree is not forced to increased by more than~$\degreeBound-\Delta$.
We now have all ingredients to prove that the local exchange heuristic always realizes ``large'' demand functions.

\begin{theorem}
Let $G=(V,E)$ be a graph with maximum degree~$\Delta$ and let \mbox{$\dem:V\rightarrow \N$} be a demand function such that $\max_{v\in V}\dem(v)+\deg_G(v)\le \degreeBound$.
The local exchange heuristic always realizes~$\dem$ if $\sum_{v\in V}\dem(v)\ge 20\Delta^4+4\Delta^2$.
\end{theorem}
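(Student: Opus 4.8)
The plan is to analyze the two parts of the heuristic — the greedy insertion phase and the subsequent exchange phase — separately, and to show that the lower bound on $\sum_{v\in V}\dem(v)$ forces enough edges into $S$ during the greedy phase to guarantee that an exchange is always possible afterwards. Throughout I would use \autoref{lem:degreeBound} only implicitly, via the hypothesis $\max_{v}(\dem(v)+\deg_G(v))\le\degreeBound$, which caps every degree in $G+S$ by $\degreeBound=2\Delta^2$ at all times (a vertex's degree in $G+S$ is $\deg_G(v)$ plus satisfied demand).

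First I would examine the configuration right after the greedy phase stalls. Let $V^{\dem}$ be the set of vertices with remaining positive demand and $S$ the current insertion set. The crucial structural fact is that greedy insertion stops only when no two demand-carrying vertices can be joined, so for every pair $u,v\in V^{\dem}$ the edge $\{u,v\}$ already lies in $E\cup S$; that is, $V^{\dem}$ induces a clique in $G+S$. Each such vertex therefore has degree at least $|V^{\dem}|-1$ and at most $2\Delta^2$, whence $|V^{\dem}|\le 2\Delta^2+1$. Consequently the leftover demand $D':=\sum_{v\in V^{\dem}}\dem(v)$ is at most $(2\Delta^2+1)\cdot 2\Delta^2 = 4\Delta^4+2\Delta^2$. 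Since each greedy edge removes exactly two units of demand, $S$ contains at least $(\sum_v\dem(v)-D')/2\ge(20\Delta^4+4\Delta^2-4\Delta^4-2\Delta^2)/2=8\Delta^4+\Delta^2$ edges, and $|S|$ never decreases thereafter (every exchange removes one edge but inserts two).

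Next I would bound, at each step of the exchange phase, the number of edges of $S$ that are \emph{not} usable. In the two-vertex case with $v_1,v_2\in V^{\dem}$, let $N_i$ denote the closed neighborhood of $v_i$ in the current $G+S$. An edge $\{u,w\}\in S$ blocks the exchange only if neither reconnection is insertable, which forces at least one of $u,w$ into $N_1\cup N_2$; hence the blocking edges lie among those incident to $N_1\cup N_2$, of which there are at most $|N_1\cup N_2|\cdot 2\Delta^2$, as every vertex carries at most $2\Delta^2$ edges of $S$. The decisive observation is that a vertex with \emph{positive remaining} demand has degree strictly below the cap, so $|N_i|=1+\deg_{G+S}(v_i)\le 2\Delta^2$ and thus $|N_1\cup N_2|\le 4\Delta^2$, yielding at most $8\Delta^4$ blocking edges. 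Since $8\Delta^4<8\Delta^4+\Delta^2\le|S|$, a usable edge always exists, the exchange reduces the total demand by two, and all degrees stay at most $2\Delta^2$. The single-vertex case, where $v$ has even remaining demand at least two, is analogous and easier: a blocking edge must meet $N_v$, so at most $|N_v|\cdot 2\Delta^2\le 4\Delta^4<|S|$ edges block.

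Finally I would assemble a termination argument: each exchange strictly decreases the finite, even, total remaining demand by two while preserving the invariants $|S|\ge 8\Delta^4+\Delta^2$ and $\deg_{G+S}(\cdot)\le 2\Delta^2$, so the heuristic cannot get stuck and must terminate with every demand satisfied. I expect the main obstacle to be making the constants line up exactly: the argument is tight, and it works only because one exploits the \emph{strict} degree slack at demand-carrying vertices to shave $|N_i|$ from $2\Delta^2+1$ down to $2\Delta^2$, which is precisely what keeps the blocking count ($8\Delta^4$) strictly below $|S|$ and pins the threshold at $20\Delta^4+O(\Delta^2)$. A secondary point to verify is that no exchange ever tries to insert a self-loop or duplicate edge; this follows because the clique property of $V^{\dem}$ gives $\{v_1,v_2\}\in E\cup S$, which rules out the degenerate reconnection that would collapse the two inserted edges into one.
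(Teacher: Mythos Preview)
Your proposal is correct and follows essentially the same approach as the paper: bound $|V^{\dem}|$ via the clique/degree argument, deduce a lower bound on $|S|$ from the total-demand hypothesis, and then count blocking edges by the neighborhoods of the demand vertices. The only differences are cosmetic---the paper phrases it as a proof by contradiction with two cases ($|V^{\dem}|\ge 2\Delta^2+2$ versus $|S|\ge 8\Delta^4$) rather than as your direct progress argument, and it uses slightly looser constants (e.g.\ $|V^{\dem}|\le 2\Delta^2+2$ and $|S|\ge 8\Delta^4$), but the substance is identical.
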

{\begin{proof}
 Towards a contradiction, assume that the local exchange heuristic gets stuck at some point such that no edge is insertable and no further exchange operation can be performed.
 Denote by~$V^\dem$ the set of vertices still having a non-zero demand in~$V^\dem$.
 Let~$S$ be the set of new edges already added at this point.
 We make a case distinction on the size of~$S$ and~$V^\dem$.

 \textbf{Case~1:} $|V^\dem|\ge \degreeBound+2$\\
 In this case consider any vertex~$v\in V^\dem$ and observe that it cannot have more than~$\degreeBound$ neighbors in~$G$ and~$S$ together. Hence, there is a vertex~$u\in V^\dem$ such that $\{v,u\}$ is insertable.

 \textbf{Case~2:} $|S|\ge 8\Delta^4$\\
 Consider first the subcase where $V^\dem$ consists only of one vertex~$v$.
 Hence, the demand of~$v$ is at least two and since no exhange operation is applicable, for all edges $\{u,w\}\in S$ it holds that either $\{v,u\}$ or $\{v,w\}$ is not insertable.
 However, as~$v$ can have at most~$\degreeBound$ neighbors in~$G$ and~$S$ together, from the lower bound on~$S$ it follows that there is a least one edge~$\{u,w\}\in S$ where the exchange operation can be applied.

 In the last subcase assume that there are two vertices~$v_1,v_2\in V^\dem$.
 Again, as no exchange operation can be applied, for each edge $\{u,w\}\in S$ it holds that either one of the edges $\{v_1,u\}$ and~$\{v_2,w\}$ or one of $\{v_1,w\}$ and $\{v_2,u\}$ is not insertable.
 However, for vertex~$v_1$~($v_2$) it holds that there are less than~$(\degreeBound)^2=4\Delta^4$ edges in~$S$ which contain a neighbor of~$v_1$ ($v_2$, resp.).
 Hence, from $|S|\ge 8\Delta^4$ it follows that there is an edge~$\{u,w\}$ in~$S$ where each of $\{u,w\}$ is non-adjacent to each of~$\{v_1,v_2\}$ and thus the exchange operation can be applied. This~completes~Case~2.

 Since the demand of each vertex in~$V^\dem$ is at most~$\degreeBound$, from~Case~1 and $\xi:=\sum_{v\in V}\dem(v)\ge 20\Delta^4+4\Delta^2$ it follows that $|S|$~is at least
 \[
 \frac{\xi-|V^\dem|\cdot \degreeBound}{2}\ge \frac{\xi-(\degreeBound+2)\degreeBound}{2}\ge 8\Delta^4.
 \]
Hence, Case~2 applies and this causes a contradiction to the assumption that the local exhange heuristic got stuck at some point.\qed
\end{proof}}


\section{Appendix: Full experimental results}

\paragraph{Further Real-World Instances.}
Besides the DIMACS instances we consider coauthor networks derived from the DBLP dataset where the vertices represent authors and the edges represent co-authorship in at least one paper.
The DBLP-dataset was generated on February 2012 following the documentation from \url{http://dblp.uni-trier.de/xml/}.
As it turned out that this DBLP graph is too large for our exact approach, we derived the following subnetworks:
First, we made the graph sparser by making two vertices adjacent if the corresponding two authors are co-authors in at least two, three or more papers instead of one paper.
We denote these graphs by graph\_thres\_1 (originial graph), graph\_thres\_2, $\ldots$, graph\_thres\_5.
Second, we just considered papers that appeared in some algorithm engineering or algorithm theory conference (the exact conference list here is: SEA, WEA, ALENEX, ESA, SODA, WADS, COCOON, ISAAC, WALCOM, AAIM, FAW, SWAT) and removed all isolated vertices.
The resulting graph is denoted by graphConference.

\begin{table}[t]
\renewcommand{\tabcolsep}{10pt}
	\caption{Graph parameters of the real world networks.} \label{tbl:graphParams}
	\begin{tabular}{ll|rrr}
 & graph & $n$ & $m$ & $\Delta$ \\ \thickhline
DIMACS graps & coPapersDBLP & 540,486 & 15,245,729 & 3,299 \\
 & coPapersCiteseer & 434,102 & 16,036,720 & 1,188 \\
 & coAuthorsDBLP & 299,067 & 977,676 & 336 \\
 & citationCiteseer & 268,495 & 1,156,647 & 1,318 \\
 & coAuthorsCiteseer & 227,320 & 814,134 & 1,372 \\ \hline
DBLP subgraphs & graph\_thres\_01 & 715,633 & 2511,988 & 804 \\
 & graph\_thres\_02 & 282,831 & 640,697 & 201 \\
 & graph\_thres\_03 & 167,006 & 293,796 & 123 \\
 & graph\_thres\_04 & 112,949 & 168,524 & 88 \\
 & graph\_thres\_05 & 81,519 & 107,831 & 71 \\
 & graphConference & 5,599 & 8,492 & 53 \\ \thickhline
	\end{tabular}
\end{table}

\begin{center}\footnotesize
	\centering
	\begin{longtable}{cr|rrr|rr|rrr}
	\caption{Full list of experimental results on real-world instances with enabled data reduction. We use the following abbreviations: CH for clustering-heuristic of \citet{LSB12}, OH for our upper bound heuristic, OPT for optimal value for the \kDegAnon problem, and DP for dynamic program for the \kDSRA problem.
	If the time entry for DP is empty, then we could not solve the \kDSRA instance within one-hour and the DP bounds display the lower and upper bounds computed so far.
	If OPT is empty, then either the \kDSRA solutions could not be realized or the \kDSRA instance could not be solved within one hour.}\label{tab:fullRealWorldResults} \\
&  & \multicolumn{3}{c|}{solution size} & \multicolumn{2}{c|}{DP bounds} & \multicolumn{3}{c}{time (in seconds)} \\
graph & k & CH & OH & OPT & lower & upper & CH & OH & DP \\ \thickhline \endfirsthead
&  & \multicolumn{3}{c|}{solution size} & \multicolumn{2}{c|}{DP bounds} & \multicolumn{3}{c}{time (in seconds)} \\
graph & k & CH & OH & OPT & lower & upper & CH & OH & DP \\ \thickhline \endhead
\multirow{14}{*}{\rotatebox[origin=c]{90}{citationCiteseer}} & 2 & 690 & 458 & \emptyBox & 319 & 457 & 1.49 & 0.53 & \emptyBox \\
\nopagebreak & 3 & 1,187 & 704 & \emptyBox & 526 & 698 & 2.59 & 0.43 & \emptyBox \\
\nopagebreak & 4 & 1,689 & 1,133 & \emptyBox & 805 & 1,111 & 1.27 & 1.08 & \emptyBox \\
\nopagebreak & 5 & 2,508 & 1,807 & \emptyBox & 1,224 & 1,756 & 1.64 & 4.14 & \emptyBox \\
\nopagebreak & 7 & 3,860 & 2,698 & \emptyBox & 1,815 & 2,653 & 0.92 & 5.01 & \emptyBox \\
\nopagebreak & 10 & 6,543 & 4,966 & \emptyBox & 3,174 & 4,769 & 1.08 & 43.05 & \emptyBox \\
\nopagebreak & 15 & 10,491 & 8,249 & \emptyBox & 5,117 & 7,931 & 1.15 & 77.71 & \emptyBox \\
\nopagebreak & 20 & 15,934 & 12,778 & \emptyBox & 7,770 & 12,058 & 1.14 & 579.94 & \emptyBox \\
\nopagebreak & 30 & 24,099 & 19,800 & \emptyBox & 11,835 & 18,797 & 1.41 & 1,538.03 & \emptyBox \\
\nopagebreak & 50 & 45,257 & \emptyBox & \emptyBox & 22,316 & 35,986 & 1.4 & \emptyBox & \emptyBox \\
\nopagebreak & 100 & 98,688 & \emptyBox & \emptyBox & 49,041 & 81,438 & 2.41 & \emptyBox & \emptyBox \\
\nopagebreak & 150 & 154,753 & \emptyBox & \emptyBox & 127,994 & 127,994 & 5.25 & \emptyBox & \emptyBox \\
\nopagebreak & 200 & 211,427 & \emptyBox & \emptyBox & 174,040 & 174,040 & 6.78 & \emptyBox & \emptyBox \\ \hline
\multirow{14}{*}{\rotatebox[origin=c]{90}{coAuthorsCiteseer}} & 2 & 1,163 & 1,002 & 1,002 & 1,002 & 1,002 & 0.74 & 0.26 & 1.44 \\
\nopagebreak & 3 & 2,156 & 1,837 & 1,836 & 1,836 & 1,836 & 0.98 & 0.33 & 85.87 \\
\nopagebreak & 4 & 3,102 & 2,979 & \emptyBox & 2,977 & 2,977 & 0.98 & 0.97 & 842.96 \\
\nopagebreak & 5 & 4,413 & 4,177 & \emptyBox & 4,163 & 4,163 & 0.92 & 1.79 & 1,417.36 \\
\nopagebreak & 7 & 6,304 & 6,021 & \emptyBox & 3,701 & 5,979 & 0.62 & 3.11 & \emptyBox \\
\nopagebreak & 10 & 9,716 & 9,332 & \emptyBox & 9,283 & 9,283 & 0.95 & 7.42 & 3,465.51 \\
\nopagebreak & 15 & 15,843 & 15,117 & \emptyBox & 7,920 & 14,932 & 0.85 & 48.58 & \emptyBox \\
\nopagebreak & 20 & 21,630 & 20,745 & \emptyBox & 20,584 & 20,584 & 1.03 & 68.74 & 64.78 \\
\nopagebreak & 30 & 34,065 & 32,719 & \emptyBox & 32,424 & 32,424 & 1.34 & 242.24 & 1,330.47 \\
\nopagebreak & 50 & 59,251 & 57,002 & \emptyBox & \emptyBox & \emptyBox & 1.97 & 2,546.24 & \emptyBox \\
\nopagebreak & 100 & 122,996 & \emptyBox & \emptyBox & 61,288 & 114,512 & 3.77 & \emptyBox & \emptyBox \\
\nopagebreak & 150 & 187,977 & \emptyBox & \emptyBox & 171,411 & 171,411 & 7.95 & \emptyBox & \emptyBox \\
\nopagebreak & 200 & 252,170 & \emptyBox & \emptyBox & 125,946 & 226,200 & 10.43 & \emptyBox & \emptyBox \\ \hline
\multirow{14}{*}{\rotatebox[origin=c]{90}{coAuthorsDBLP}} & 2 & 97 & 62 & \emptyBox & 61 & 61 & 1.47 & 0.08 & 0.04 \\
\nopagebreak & 3 & 253 & 180 & 179 & 179 & 179 & 1.26 & 0.06 & 0.12 \\
\nopagebreak & 4 & 344 & 231 & \emptyBox & 230 & 230 & 0.82 & 0.14 & 0.26 \\
\nopagebreak & 5 & 531 & 321 & 317 & 317 & 317 & 1.41 & 0.29 & 26.77 \\
\nopagebreak & 7 & 817 & 542 & \emptyBox & 414 & 527 & 0.63 & 0.16 & \emptyBox \\
\nopagebreak & 10 & 1,372 & 893 & \emptyBox & 869 & 869 & 1.03 & 0.48 & 1.58 \\
\nopagebreak & 15 & 2,352 & 1,549 & \emptyBox & 1,481 & 1,481 & 0.97 & 2.62 & 4.63 \\
\nopagebreak & 20 & 3,323 & 2,188 & \emptyBox & 2,081 & 2,081 & 0.84 & 7.69 & 7.64 \\
\nopagebreak & 30 & 5,381 & 3,557 & \emptyBox & 3,391 & 3,391 & 0.75 & 19.2 & 2.40 \\
\nopagebreak & 50 & 9,661 & 6,700 & \emptyBox & 4,744 & 6,042 & 0.61 & 158.36 & \emptyBox \\
\nopagebreak & 100 & 21,267 & 15,050 & \emptyBox & 10,577 & 11,981 & 1.13 & 885.79 & \emptyBox \\
\nopagebreak & 150 & 32,932 & 24,139 & \emptyBox & 16,483 & 16,483 & 0.63 & 801.69 & 15.38 \\
\nopagebreak & 200 & 45,411 & 32,925 & 21,960 & 21,960 & 21,960 & 0.62 & 271.55 & 11.94 \\ \hline
\multirow{14}{*}{\rotatebox[origin=c]{90}{coPapersCiteseer}} & 2 & 203 & 80 & \emptyBox & 78 & 78 & 9.9 & 0.1 & 0.39 \\
\nopagebreak & 3 & 394 & 136 & 136 & 136 & 136 & 9.66 & 0.11 & 0.13 \\
\nopagebreak & 4 & 668 & 231 & 231 & 231 & 231 & 9.31 & 0.12 & 0.13 \\
\nopagebreak & 5 & 998 & 327 & 327 & 327 & 327 & 10.32 & 0.19 & 0.17 \\
\nopagebreak & 7 & 1,915 & 657 & 657 & 657 & 657 & 7 & 0.35 & 0.36 \\
\nopagebreak & 10 & 2,533 & 960 & 960 & 960 & 960 & 8.83 & 0.74 & 0.72 \\
\nopagebreak & 15 & 5,147 & 1,847 & 1,845 & 1,845 & 1,845 & 7.96 & 2.17 & 2.17 \\
\nopagebreak & 20 & 6,829 & 2,627 & 2,627 & 2,627 & 2,627 & 8.4 & 4.1 & 4.13 \\
\nopagebreak & 30 & 11,667 & 4,273 & 4,273 & 4,273 & 4,273 & 7.62 & 9.58 & 9.47 \\
\nopagebreak & 50 & 22,795 & 9,312 & 9,311 & 9,311 & 9,311 & 7.48 & 47.5 & 47.51 \\
\nopagebreak & 100 & 51,456 & 22,030 & 22,007 & 22,007 & 22,007 & 5.97 & 263.95 & 264.55 \\
\nopagebreak & 150 & 81,101 & 36,011 & 35,881 & 35,881 & 35,881 & 5.51 & 490.15 & 487.55 \\
\nopagebreak & 200 & 113,526 & 51,379 & 51,361 & 51,361 & 51,361 & 4.71 & 995.8 & 984.53 \\ \hline
\multirow{14}{*}{\rotatebox[origin=c]{90}{coPapersDBLP}} & 2 & 1,890 & 1,747 & \emptyBox & 950 & 1,733 & 11.28 & 2.13 & \emptyBox \\
\nopagebreak & 3 & 3,418 & 3,065 & \emptyBox & 1,683 & 3,030 & 11.66 & 4.75 & \emptyBox \\
\nopagebreak & 4 & 6,236 & 5,551 & \emptyBox & 2,996 & 5,497 & 10.36 & 11.63 & \emptyBox \\
\nopagebreak & 5 & 9,085 & 8,219 & \emptyBox & 4,414 & 8,121 & 10.66 & 28.83 & \emptyBox \\
\nopagebreak & 7 & 12,166 & 10,764 & \emptyBox & 5,897 & 10,615 & 8.39 & 55.55 & \emptyBox \\
\nopagebreak & 10 & 19,631 & 17,571 & \emptyBox & 9,557 & 17,328 & 9.95 & 149.56 & \emptyBox \\
\nopagebreak & 15 & 31,663 & 28,538 & \emptyBox & 15,403 & 27,991 & 9.38 & 729.36 & \emptyBox \\
\nopagebreak & 20 & 43,637 & 39,722 & \emptyBox & 21,469 & 39,048 & 10.24 & 1,372.12 & \emptyBox \\
\nopagebreak & 30 & 70,590 & \emptyBox & \emptyBox & 34,714 & 62,898 & 11.2 & \emptyBox & \emptyBox \\
\nopagebreak & 50 & 122,378 & \emptyBox & \emptyBox & 60,535 & 110,295 & 14.37 & \emptyBox & \emptyBox \\
\nopagebreak & 100 & 258,230 & \emptyBox & \emptyBox & 128,143 & 233,508 & 22.16 & \emptyBox & \emptyBox \\
\nopagebreak & 150 & 401,143 & \emptyBox & \emptyBox & 240,068 & 359,665 & 46.44 & \emptyBox & \emptyBox \\
\nopagebreak & 200 & 540,505 & \emptyBox & \emptyBox & 268,409 & 485,386 & 61.74 & \emptyBox & \emptyBox \\ \hline
\multirow{14}{*}{\rotatebox[origin=c]{90}{graph\_thres\_01}} & 2 & 290 & 179 & \emptyBox & 166 & 176 & 7.16 & 0.24 & \emptyBox \\
\nopagebreak & 3 & 748 & 483 & \emptyBox & 372 & 475 & 6.76 & 2.81 & \emptyBox \\
\nopagebreak & 4 & 1,154 & 729 & \emptyBox & 547 & 714 & 5.67 & 0.97 & \emptyBox \\
\nopagebreak & 5 & 1,621 & 1,009 & \emptyBox & 745 & 989 & 5.78 & 1.89 & \emptyBox \\
\nopagebreak & 7 & 2,516 & 1,638 & \emptyBox & 1,188 & 1,600 & 4.54 & 5.59 & \emptyBox \\
\nopagebreak & 10 & 3,969 & 2,632 & \emptyBox & 1,854 & 2,524 & 4.51 & 14.46 & \emptyBox \\
\nopagebreak & 15 & 6,461 & 4,488 & \emptyBox & 3,152 & 4,263 & 4.69 & 180.11 & \emptyBox \\
\nopagebreak & 20 & 8,761 & 6,127 & \emptyBox & 4,227 & 5,731 & 4.35 & 179.17 & \emptyBox \\
\nopagebreak & 30 & 13,643 & 9,712 & \emptyBox & 6,634 & 9,277 & 4.01 & 577.17 & \emptyBox \\
\nopagebreak & 50 & 24,471 & \emptyBox & \emptyBox & 11,958 & 16,874 & 3.39 & \emptyBox & \emptyBox \\
\nopagebreak & 100 & 53,101 & \emptyBox & \emptyBox & 37,792 & 37,792 & 3.19 & \emptyBox & \emptyBox \\
\nopagebreak & 150 & 82,684 & \emptyBox & \emptyBox & 40,995 & 59,281 & 3.03 & \emptyBox & \emptyBox \\
\nopagebreak & 200 & 115,553 & \emptyBox & \emptyBox & 79,952 & 79,952 & 3.93 & \emptyBox & \emptyBox \\ \hline
\multirow{14}{*}{\rotatebox[origin=c]{90}{graph\_thres\_02}} & 2 & 32 & 16 & 16 & 16 & 16 & 0.78 & 0.03 & 0.03 \\
\nopagebreak & 3 & 111 & 50 & 50 & 50 & 50 & 0.97 & 0.04 & 0.04 \\
\nopagebreak & 4 & 165 & 80 & 80 & 80 & 80 & 0.47 & 0.04 & 0.04 \\
\nopagebreak & 5 & 258 & 115 & 115 & 115 & 115 & 0.88 & 0.05 & 0.04 \\
\nopagebreak & 7 & 392 & 178 & 178 & 178 & 178 & 0.42 & 0.04 & 0.05 \\
\nopagebreak & 10 & 609 & 288 & 288 & 288 & 288 & 0.47 & 0.07 & 0.06 \\
\nopagebreak & 15 & 1,098 & 514 & \emptyBox & 502 & 502 & 0.45 & 0.5 & 0.13 \\
\nopagebreak & 20 & 1,564 & 784 & \emptyBox & 754 & 754 & 0.45 & 0.7 & 0.65 \\
\nopagebreak & 30 & 2,581 & 1,389 & \emptyBox & 1,299 & 1,299 & 0.41 & 3.79 & 358.02 \\
\nopagebreak & 50 & 5,094 & 2,729 & 2,410 & 2,410 & 2,410 & 0.38 & 4.13 & 0.52 \\
\nopagebreak & 100 & 11,281 & 6,764 & 5,419 & 5,419 & 5,419 & 0.34 & 6.77 & 1.57 \\
\nopagebreak & 150 & 18,179 & 11,347 & 8,709 & 8,709 & 8,709 & 0.54 & 17.61 & 3.11 \\
\nopagebreak & 200 & 24,950 & 16,156 & 11,921 & 11,921 & 11,921 & 0.83 & 35.33 & 4.04 \\ \hline
\multirow{14}{*}{\rotatebox[origin=c]{90}{graph\_thres\_03}} & 2 & 38 & 25 & 25 & 25 & 25 & 0.26 & 0.02 & 0.02 \\
\nopagebreak & 3 & 71 & 45 & \emptyBox & 44 & 44 & 0.29 & 0.26 & 0.02 \\
\nopagebreak & 4 & 121 & 69 & 68 & 68 & 68 & 0.18 & 0.04 & 0.03 \\
\nopagebreak & 5 & 166 & 94 & 94 & 94 & 94 & 0.24 & 0.03 & 0.06 \\
\nopagebreak & 7 & 276 & 154 & \emptyBox & 150 & 150 & 0.18 & 0.08 & 0.11 \\
\nopagebreak & 10 & 439 & 240 & \emptyBox & 238 & 238 & 0.17 & 0.22 & 0.20 \\
\nopagebreak & 15 & 763 & 382 & \emptyBox & 374 & 374 & 0.17 & 0.73 & 0.08 \\
\nopagebreak & 20 & 1,015 & 536 & \emptyBox & 514 & 514 & 0.16 & 0.26 & 0.61 \\
\nopagebreak & 30 & 1,668 & 830 & 797 & 797 & 797 & 0.15 & 0.45 & 0.07 \\
\nopagebreak & 50 & 3,100 & 1,543 & 1,491 & 1,491 & 1,491 & 0.15 & 0.44 & 0.14 \\
\nopagebreak & 100 & 7,150 & 3,476 & 3,380 & 3,380 & 3,380 & 0.15 & 0.82 & 0.47 \\
\nopagebreak & 150 & 11,347 & 6,359 & 5,354 & 5,354 & 5,354 & 0.23 & 4.12 & 0.67 \\
\nopagebreak & 200 & 15,349 & 8,874 & 7,450 & 7,450 & 7,450 & 0.33 & 6.57 & 1.03 \\ \hline
\multirow{14}{*}{\rotatebox[origin=c]{90}{graph\_thres\_04}} & 2 & 28 & 15 & 15 & 15 & 15 & 0.11 & 0.01 & 0.01 \\
\nopagebreak & 3 & 46 & 25 & 25 & 25 & 25 & 0.13 & 0.05 & 0.02 \\
\nopagebreak & 4 & 82 & 38 & 38 & 38 & 38 & 0.1 & 0.03 & 0.02 \\
\nopagebreak & 5 & 85 & 39 & 39 & 39 & 39 & 0.11 & 0.01 & 0.02 \\
\nopagebreak & 7 & 169 & 77 & 77 & 77 & 77 & 0.1 & 0.01 & 0.02 \\
\nopagebreak & 10 & 267 & 134 & 134 & 134 & 134 & 0.09 & 0.03 & 0.02 \\
\nopagebreak & 15 & 481 & 230 & 229 & 229 & 229 & 0.09 & 0.07 & 0.02 \\
\nopagebreak & 20 & 696 & 323 & \emptyBox & 321 & 321 & 0.08 & 0.15 & 0.02 \\
\nopagebreak & 30 & 1,145 & 548 & 548 & 548 & 548 & 0.08 & 0.03 & 0.03 \\
\nopagebreak & 50 & 2,174 & 1,041 & 1,039 & 1,039 & 1,039 & 0.08 & 0.08 & 0.07 \\
\nopagebreak & 100 & 5,177 & 2,430 & 2,398 & 2,398 & 2,398 & 0.09 & 0.35 & 0.20 \\
\nopagebreak & 150 & 8,165 & 4,071 & 3,813 & 3,813 & 3,813 & 0.16 & 1.2 & 0.30 \\
\nopagebreak & 200 & 11,280 & 6,270 & 5,327 & 5,327 & 5,327 & 0.2 & 3.06 & 0.49 \\ \hline
\multirow{14}{*}{\rotatebox[origin=c]{90}{graph\_thres\_05}} & 2 & 17 & 9 & 8 & 8 & 8 & 0.07 & 0.01 & 0.01 \\
\nopagebreak & 3 & 28 & 12 & 12 & 12 & 12 & 0.06 & 0.01 & 0.01 \\
\nopagebreak & 4 & 51 & 25 & 25 & 25 & 25 & 0.06 & 0.01 & 0.01 \\
\nopagebreak & 5 & 64 & 31 & 31 & 31 & 31 & 0.06 & 0.01 & 0.01 \\
\nopagebreak & 7 & 135 & 60 & 60 & 60 & 60 & 0.06 & 0.01 & 0.01 \\
\nopagebreak & 10 & 207 & 102 & 102 & 102 & 102 & 0.06 & 0.01 & 0.01 \\
\nopagebreak & 15 & 344 & 176 & 174 & 174 & 174 & 0.05 & 0.03 & 0.01 \\
\nopagebreak & 20 & 537 & 261 & 261 & 261 & 261 & 0.05 & 0.01 & 0.01 \\
\nopagebreak & 30 & 919 & 429 & 429 & 429 & 429 & 0.05 & 0.02 & 0.02 \\
\nopagebreak & 50 & 1,791 & 857 & 856 & 856 & 856 & 0.05 & 0.05 & 0.04 \\
\nopagebreak & 100 & 4,248 & 1,956 & 1,953 & 1,953 & 1,953 & 0.05 & 0.12 & 0.11 \\
\nopagebreak & 150 & 6,906 & 3,214 & 3,122 & 3,122 & 3,122 & 0.1 & 0.4 & 0.15 \\
\nopagebreak & 200 & 9,581 & 4,590 & 4,375 & 4,375 & 4,375 & 0.16 & 0.88 & 0.24 \\ \hline
\multirow{14}{*}{\rotatebox[origin=c]{90}{graphConference}} & 2 & 64 & 58 & 58 & 58 & 58 & 0.04 & 0.04 & 0.04 \\
\nopagebreak & 3 & 104 & 93 & 93 & 93 & 93 & 0.01 & 0 & 0.00 \\
\nopagebreak & 4 & 192 & 154 & 154 & 154 & 154 & 0.01 & 0 & 0.02 \\
\nopagebreak & 5 & 261 & 216 & \emptyBox & 214 & 214 & 0.01 & 0.02 & 0.02 \\
\nopagebreak & 7 & 417 & 346 & \emptyBox & 344 & 344 & 0.01 & 0.04 & 0.05 \\
\nopagebreak & 10 & 690 & 553 & \emptyBox & 536 & 536 & 0.06 & 0.22 & 0.78 \\
\nopagebreak & 15 & 1,122 & 880 & \emptyBox & 851 & 851 & 0.01 & 0.15 & 0.15 \\
\nopagebreak & 20 & 1,630 & 1,204 & \emptyBox & 1,157 & 1,157 & 0.01 & 0.66 & 1.01 \\
\nopagebreak & 30 & 2,474 & 1,872 & \emptyBox & 1,732 & 1,732 & 0.07 & 11.21 & 1.25 \\
\nopagebreak & 50 & 4,454 & 3,177 & \emptyBox & 2,694 & 2,694 & 0.01 & 62.66 & 0.90 \\
\nopagebreak & 100 & 8,506 & 6,428 & 4,636 & 4,636 & 4,636 & 0.04 & 2.94 & 0.27 \\
\nopagebreak & 150 & 15,699 & \emptyBox & 7,225 & 7,225 & 7,225 & 0.41 & \emptyBox & \emptyBox \\
\nopagebreak & 200 & 21,267 & \emptyBox & 9,942 & 9,942 & 9,942 & 0.75 & \emptyBox & \emptyBox \\ \hline
\multirow{14}{*}{\rotatebox[origin=c]{90}{citationCiteseer}} & 2 & 690 & 457 & \emptyBox & 311 & 457 & 1.69 & 0.47 & \emptyBox \\
\nopagebreak & 3 & 1,187 & 698 & \emptyBox & 519 & 698 & 1.59 & 0.37 & \emptyBox \\
\nopagebreak & 4 & 1,690 & 1,119 & \emptyBox & 802 & 1,111 & 1 & 0.71 & \emptyBox \\
\nopagebreak & 5 & 2,502 & 1,816 & \emptyBox & 1,219 & 1,756 & 1.73 & 4.61 & \emptyBox \\
\nopagebreak & 7 & 3,864 & 2,735 & \emptyBox & 1,814 & 2,653 & 0.9 & 7.3 & \emptyBox \\
\nopagebreak & 10 & 6,537 & 4,882 & \emptyBox & 3,173 & 4,769 & 1.04 & 29.61 & \emptyBox \\
\nopagebreak & 15 & 10,478 & 8,365 & \emptyBox & 5,116 & 7,931 & 0.95 & 101.16 & \emptyBox \\
\nopagebreak & 20 & 15,889 & 12,800 & \emptyBox & 7,769 & 12,058 & 0.83 & 579.24 & \emptyBox \\
\nopagebreak & 30 & 24,166 & 19,988 & \emptyBox & 11,834 & 18,797 & 1.28 & 1,659.57 & \emptyBox \\
\nopagebreak & 50 & 45,214 & \emptyBox & \emptyBox & 22,316 & 35,986 & 1.25 & \emptyBox & \emptyBox \\
\nopagebreak & 100 & 98,683 & \emptyBox & \emptyBox & 49,040 & 81,438 & 2.23 & \emptyBox & \emptyBox \\
\nopagebreak & 150 & 154,753 & \emptyBox & \emptyBox & 77,049 & 127,994 & 5.31 & \emptyBox & \emptyBox \\
\nopagebreak & 200 & 211,215 & \emptyBox & \emptyBox & 105,102 & 174,040 & 6.91 & \emptyBox & \emptyBox \\ \hline
\multirow{14}{*}{\rotatebox[origin=c]{90}{coAuthorsCiteseer}} & 2 & 1,163 & 1,002 & 1,002 & 1,002 & 1,002 & 0.65 & 0.26 & 45.90 \\
\nopagebreak & 3 & 2,156 & 1,837 & 1,836 & 1,836 & 1,836 & 0.73 & 0.31 & 270.99 \\
\nopagebreak & 4 & 3,103 & 2,983 & \emptyBox & 2,977 & 2,977 & 0.54 & 0.76 & 829.92 \\
\nopagebreak & 5 & 4,416 & 4,173 & \emptyBox & 4,163 & 4,163 & 0.59 & 1.67 & 2,491.17 \\
\nopagebreak & 7 & 6,298 & 6,008 & \emptyBox & 3,168 & 5,979 & 0.62 & 2.51 & \emptyBox \\
\nopagebreak & 10 & 9,716 & 9,331 & \emptyBox & 4,902 & 9,283 & 0.69 & 7.98 & \emptyBox \\
\nopagebreak & 15 & 15,858 & 15,074 & \emptyBox & 7,910 & 14,932 & 0.78 & 41.94 & \emptyBox \\
\nopagebreak & 20 & 21,633 & 20,781 & \emptyBox & 10,823 & 20,584 & 0.89 & 87.45 & \emptyBox \\
\nopagebreak & 30 & 34,094 & 32,746 & \emptyBox & 17,041 & 32,424 & 1.31 & 268.99 & \emptyBox \\
\nopagebreak & 50 & 59,244 & 56,849 & \emptyBox & \emptyBox & \emptyBox & 1.88 & 2,541.73 & \emptyBox \\
\nopagebreak & 100 & 122,996 & \emptyBox & \emptyBox & 61,280 & 114,512 & 3.34 & \emptyBox & \emptyBox \\
\nopagebreak & 150 & 187,829 & \emptyBox & \emptyBox & 93,528 & 171,411 & 7.92 & \emptyBox & \emptyBox \\
\nopagebreak & 200 & 252,011 & \emptyBox & \emptyBox & 125,903 & 226,200 & 10.38 & \emptyBox & \emptyBox \\
 \thickhline
	\end{longtable}
\end{center}

\end{document}